\title{Upper bounds for stabbing simplices by a line}
\date{}
\author{Inbar Daum-Sadon\thanks{\texttt{inbar.sadon@gmail.com}. Ariel University, Ariel, Israel.} \and Gabriel Nivasch\thanks{ \texttt{gabrieln@ariel.ac.il}. Ariel University, Ariel, Israel.}}
\newcommand{\R}{\mathbb R}
\newcommand{\stconv}{\mathrm{stconv}}
\newcommand{\conv}{\mathrm{conv}}
\newcommand{\BB}{\mathrm{BB}}
\newcommand{\RecFSG}{\mathrm{RecFSG}}
\newcommand{\RecFSD}{\mathrm{RecFSD}}
\newcommand{\FSD}{\mathrm{FSD}}
\newtheorem{theorem}{Theorem}
\newtheorem{lemma}[theorem]{Lemma}
\newtheorem{observation}[theorem]{Observation}
\begin{document}
\maketitle

\begin{abstract}
It is known that for every dimension $d\ge 2$ and every $k<d$ there exists a constant $c_{d,k}>0$ such that for every $n$-point set $X\subset \R^d$ there exists a $k$-flat that intersects at least $c_{d,k} n^{d+1-k} - o(n^{d+1-k})$ of the $(d-k)$-dimensional simplices spanned by $X$. However, the optimal values of the constants $c_{d,k}$ are mostly unknown. The case $k=0$ (stabbing by a point) has received a great deal of attention.

In this paper we focus on the case $k=1$ (stabbing by a line). Specifically, we try to determine the upper bounds yielded by two point sets, known as the \emph{stretched grid} and the \emph{stretched diagonal}. Even though the calculations are independent of $n$, they are still very complicated, so we resort to analytical and numerical software methods. We provide strong evidence that, surprisingly, for $d=4,5,6$ the stretched grid yields better bounds than the stretched diagonal (unlike for all cases $k=0$ and for the case $(d,k)=(3,1)$, in which both point sets yield the same bound). Our experiments indicate that the stretched grid yields $c_{4,1}\leq 0.00457936$, $c_{5,1}\leq 0.000405335$, and $c_{6,1}\leq 0.0000291323$.
\end{abstract}

\section{Introduction} 

A $k$-dimensional simplex is the convex hull of $k+1$ affinely independent points in $\R^d$, $d\ge k$. The $k+1$ points are said to \emph{span} the simplex. The following result was proven for the planar case by Boros and F\"{u}redi \cite{Boros1984} and for arbitrary dimension by B\'{a}r\'{a}ny \cite{BARANY1982141}: \emph{For every $d\geq 2$ there exists a constant $c_d>0$ such that for every $n$, if $X$ is any $n$-point set in $\R^d$ in general position, then there exists a point $x$ in $\R^d$ contained in at least $c_dn^{d+1}-o(n^{d+1})$ full-dimensional simplices spanned by $X$, where $c_d>0$ is a constant depending only on $d$.} Matou\v{s}ek \cite{matousek2002lectures} called this result the \emph{First Selection Lemma}.
It can be used to construct so-called \emph{weak $\varepsilon$-nets} (see \cite{matousek2002lectures}).

The problem of determining largest possible values of the constants $c_d$ has sparked a lot of interest. For the planar case, Boros and F\"{u}redi \cite{Boros1984} showed that $c_2\geq 1/27$. For arbitrary dimension, B\'{a}r\'{a}ny \cite{BARANY1982141} proved  that $c_d\geq\ 1/((d+1)!(d+1)^d)$. Wagner \cite{wagner2003k} subsequently improved this lower bound to $c_d \geq (d^2+1)/((d+1)!(d+1)^{d+1})$. In particular, $c_3\geq 0.001627$. Basit et al.~\cite{Basit:2010:IFS:1810959.1811017} then improved the bound for $c_3$ to $c_3 \geq 0.00227$.

Later, Gromov improved the general lower bound to $c_d\geq 2d/((d+1)!^2(d+1))$ \cite{Gromov2010} (see simpler expositions of this result by Karasev \cite{karasevsimpler} and Jiang \cite{jiang}). This is an improvement by roughly a factor of $e^d$ over the previous bound. In particular, $c_3\geq 0.002604$. Matou\v{s}ek and Wagner \cite{Matousek:1330159} then showed that $c_3 \geq 0.00263$. Later, Kr\'{a}l et al. \cite{Kral2012} slightly improved Gromov's bound for general $d$, yielding in particular $c_3 \geq (3-\sqrt{2})/512 \simeq 0.00309$.

Regarding upper bounds, K\'arteszi~\cite{karteszi} (for $d=2$) and B\'ar\'any~\cite{BARANY1982141} (for general $d$) proved that if $X$ is \emph{any} point set in general position in $\R^d$, then no point in $\R^d$ is contained in more than $n^{d+1}/(2^d (d+1)!) + O(n^d)$ simplices spanned by $X$. Hence, $c_d \leq 1/(2^d (d+1)!)$. This upper bound is ``trivial'' in the sense that it does not rely any specific construction for $X$.

Bukh et al.~obtained the first ``non-trivial'' upper bounds, by constructing, for every $n$ and $d$, a specific point set $X\subset \R^d$ that witnesses $c_d\leq (d+1)^{-(d+1)}$ \cite{Bukh2010}. These are the best upper bounds currently known. The set $X$ is the so-called \emph{stretched diagonal} (presented below). Another point set, called the \emph{stretched grid}~\cite{Bukh2011} (also presented below) gives the same upper bound.

Hence, $c_2=1/27$ is tight, and $c_3\leq 0.0039$. Thus, for $d\geq 3$ the optimal value of $c_d$ is not known, and there is a gap of a factor of roughly $d^d$ between the best lower and upper bounds.

(Some authors prefer to talk about the constant $c'_d$ such that there exists a point in at least $c'_d\binom{n}{d+1}-O(n^d)$ simplices. Then the relation between the two constants is that $c'_d=c_d\cdot (d+1)!$.)

\subsection{Generalization of the First Selection Lemma}

The First Selection Lemma can be generalized as follows. \emph{If $X\subseteq \R^d$ is an $n$-point set in general position, and $k$ is an integer, $0 \leq k < d$, then there exists a $k$-flat that intersects at least $c_{d,k}n^{d-k+1}-O(n^{d-k})$ of the $(d-k)$-dimensional simplices spanned by $X$, for some positive constants $c_{d,k}$ that depend only on $d$ and $k$.} A trivial projection argument yields $c_{d,k}\ge c_{d-k}$.
The problem of determining the maximum values of the constants $c_{d,k}$ was raised by Bukh et al.~\cite{Bukh2010}.

\subsection{Lower and upper bounds for $c_{d,k}$}
The case $k=d-1$ is trivial: An optimal hyperplane is one that partitions the given point set into two equal parts. Hence, $c_{d,d-1}=1/4$.

By a simple projection argument, the above-mentioned result of K\'artesi and B\'ar\'any yields the ``trivial'' upper bound of $c_{d,k} \le 1/(2^{d-k} (d-k+1)!)$.

For the case $k=d-2$, it was shown in \cite{Bukh2010} that there exists a $(d-2)$-flat
that stabs at least $c_{d,d-2}n^3-O(n^2)$ of the triangles spanned by $X$, with
\begin{equation*}
   c_{d,d-2}\geq \frac{1}{24} \Big(1- \frac{1}{(2d-1)^2}\Big).
\end{equation*}
In particular, for $d=3$ there always exists a line that stabs at least $n^3/25 - O(n^2)$ triangles. 

For the case $(d,k)=(3,1)$, Bukh claimed without providing details that in the \emph{stretched grid} every line stabs at most $n^3/25+o(n^3)$ triangles, and therefore $c_{3,1}=1/25$ is tight (this is mentioned in \cite{Gabriel_thesis}).

\subsection{Related problems}

Ashok, Rajgopal and Govindarajan \cite{DBLP:journals/corr/AshokRG14} studied variants of the First Selection Lemma for other classes of geometric objects, such as spheres and axis-parallel boxes in $\R^d$, and  quadrants and slabs in the plane. They also considered
a \emph{strong} variant of the First Selection Lemma, where the
piercing point must come from the point set itself. 

The \emph{Second Selection Lemma} is a generalization of the First Selection Lemma. It states that for every $n$, if $X$ is an $n$-point set in $\R^d$ and $F$ is a family of $\alpha\binom{n}{d+1}$ $X$-simplices, then there exists a point contained in at least $b_d\alpha^{s_d}\binom{n}{d+1}$ simplices of $F$, for some constants $b_d>0$ and $s_d$.

The Second Selection Lemma was conjectured, and proved in the planar case, by B\'ar\'any, F\"uredi and Lov\'asz \cite{Barany1990} (see also Matou\v{s}ek \cite{matousek2002lectures}). A proof for the planar case by a different technique, with considerably better quantitative bounds, was given by Aronov et al.~\cite{Aronov1991}. This bound was then slightly improved by Eppstein, Nivasch, and Sharir \cite{eppstein1993improved,NIVASCH2009494}.
The full proof of the Second Selection Lemma for arbitrary dimension was put together by B\'ar\'any et al.~\cite{Barany1990}, Alon et al.~\cite{alon_barany_furedi_kleitman_1992}, and {\v{Z}}ivaljevi{\'c}
and Vre{\'c}ica \cite{vzivaljevic1992colored}.

The Second Selection Lemma has been used to bound the number of \emph{$k$-sets} in arbitrary dimension, where a $k$-set of a point set $X$ is a subset of $X$ of size $k$ that can be separated from the rest of $X$ by a hyperplane.

Several variants of the Second Selection Lemma, involving geometric objects other than simplices, were proved by Chazelle et al.~\cite{chazelle1994selecting}, Sharir and Smorodinsky  \cite{smorodinsky_sharir_2004}, and Ashok et al.~\cite{DBLP:journals/corr/AshokRG14}.

A similar problem, of \emph{centerline depth}, has been studied by Magazinov and P\'or \cite{MP} and Blagojevi\'c, Karasev, and Magazinov \cite{BKM}.

\subsection{Our results}
In this work we try to determine the upper bounds for the constants $c_{d,1}$ given by the stretched grid and the stretched diagonal. 

For $d=3$, we find that both point sets yield $c_{3,1}\leq 1/25$ according to analytical software methods (as Bukh had already claimed for the stretched grid). Surprisingly, however, for $4\leq d\leq 6$ we find very strong numerical evidence that the stretched grid yields a better bound than the stretched diagonal: On the one hand, for the stretched diagonal there always exists a line that stabs at least $n^d/(d+2)^{d-1}- o(n^d)$ simplices.
On the other hand, the stretched grid likely yields $c_{4,1}\leq 0.00457936$, $c_{5,1}\leq 0.000405335$, and $c_{6,1}\leq 0.0000291323$, according to non-rigorous numerical optimization methods.

\paragraph{Organization of this paper.}
Section~\ref{sec_stconv} reviews the stretched grid and the stretched diagonal, as well as \emph{stair-convexity}, the framework used to analyze them. Section~\ref{sec_grid} presents our results regarding the stretched grid. Section~\ref{sec_diag} presents our results regarding the stretched diagonal. We conclude with some remarks in Section~\ref{sec_final}.

\section{Stair-convexity}\label{sec_stconv}

Following Bukh et al.~\cite{Bukh2011} we define the \emph{stretched grid} as an axis-parallel grid of points in $\R^d$ where, in each axis direction $i$, $2\leq i\leq d$, the spacing between consecutive ``layers'' increases rapidly, and furthermore, the rate of increase for direction $i$ is much larger than that for direction $i-1$. To simplify  calculations, we make the coordinates increase rapidly also in the first direction. We denote the stretched grid by $G_s$. Hence, $G_s(n^d)\subseteq \R^d$ is of the form $G_s(n^d)=X_1\times\ldots\times X_d$ where each $X_i\subseteq \R$ is of the form $X_i=\{x_{i,1},\ldots,x_{i,n}\}$, where $x_{i,1}<x_{i,2}<\cdots<x_{i,n}$ is the $i$th axis that contains $n$ points. We define the sets $X_i$ by induction on $i$, together with relations $\ll_i$ on $\R$,
which describe ``at least how fast'' the terms in $X_i$ must grow (but we will also
use $\ll_i$ for comparing real numbers other than the members of $X_i$). 
We start by letting $x\ll_1y$ mean $K_1x\leq y$, where $K_1=2^d$. Then we choose $X_1$ so that $x_{1,1}=1$ and $x_{1,1}\ll_1 x_{1,2}\ll_1 \cdots \ll_1 x_{1,n}$.
Having defined $X_{i-1}$ and $\ll_{i-1}$ , we set $K_i=2^d x_{(i-1),n}$, we define $x\ll_i y$ to mean $K_ix\leq y$, and we choose $X_i$ so that $x_{i,1}=1$ and $x_{i,1}\ll_i x_{i,2}\ll_i \cdots \ll_i x_{i,n}$.

The \emph{stretched diagonal} is the following subset of the stretched grid:
\begin{equation*}
D_s(n)=\{(x_{1,j},\ldots,x_{d,j})\in \R^d:j=1,2,\ldots,n\}.
\end{equation*}
In other words, the $j$th point of the stretched diagonal is built from the $j$th element of each of $X_1, \ldots, X_d$.

Define the \emph{uniform grid} in the unit cube $[0,1]^d$ by
\begin{equation*}
 G_u=G_u(n^d)=\left\{0,\frac{1}{n-1},\frac{2}{n-1},\ldots,\frac{n-1}{n-1}\right\}^d.
\end{equation*}
Let $\BB(G_s)= [1,x_{1,n}]\times[1,x_{2,n}]\times\cdots\times[1,x_{d,n}]$ be the bounding box of $G_s$,
and let $\pi : \BB(G_s)\rightarrow [0,1]^d$
be a bijection that maps $G_s$ onto $G_u$ and preserves
ordering in each coordinate (that is, we map points of $G_s$ to the corresponding
points of $G_u$ and we squeeze the “elementary boxes” of $G_s$ onto the corresponding elementary boxes of $G_u$). See Figure \ref{fig:Stretched grid}.
\begin{figure}
\centering
\includegraphics{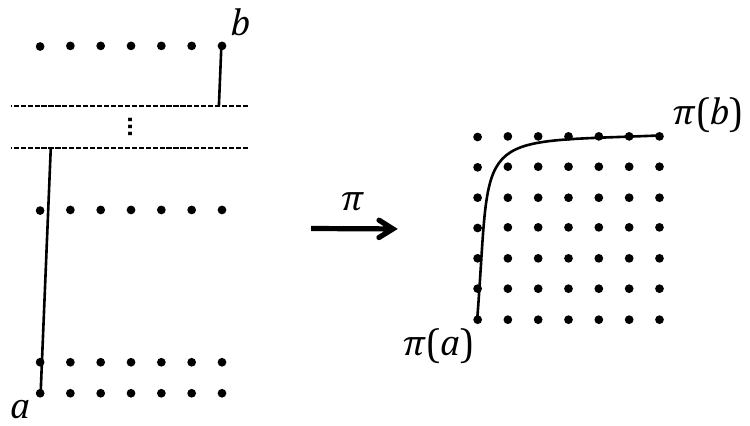}
\caption{\label{fig:Stretched grid} The stretched grid and the mapping $\pi$.
}
\end{figure}
Let us consider the effect of $\pi$ on a straight-line segment $u=ab$ connecting two grid points $a,b\in G_s$. Suppose without loss of generality that $b_d\ge a_d$. Since $G_s$ is so much more stretched in the $d$th direction than in all the previous directions, $\pi(u)$ ascends in the $d$th direction from $\pi(a)$, reaching almost the height of $\pi(b)$, before moving significantly in any other direction. From there on, we can continue tracing $\pi(u)$ by induction on $d$. This observation motivates the notion of \emph{stair-convexity}.

\subsection{Stair-convexity}
Given a pair of points $a,b\in \R^d$, define the \emph{stair-path}  $\sigma(a,b)$ between them as a polygonal path connecting $a$ and $b$ and consisting of at most $d$ closed line segments, each parallel to one of the coordinate axes.
The definition goes by induction on $d$; for $d=1$, the stair-path $\sigma(a,b)$ is simply the segment $ab$. For $d\geq2$, after possibly interchanging $a$ and $b$, let us assume $a_d\leq b_d$. We set $a'=(a_1,\ldots,a_d-1,b_d)$, and we let $\sigma(a,b)$ be the union of the segment $aa'$ and the stair-path $\sigma(a',b)$; for the latter we use induction, ignoring the common last coordinate of $a'$ and $b$.

We call a set $S\subseteq \R^d$ \emph{stair-convex} if for every $a,b \in S$ we have $\sigma(a,b) \subseteq S$.
We define the \emph{stair-convex hull} of a set $S\subseteq \R^d$ as the intersection of all stair-convex sets containing $S$, and we will denote it as $\stconv (S)$. 

\subsection{Intersection of stair-convex hulls of two sets}\label{Intersection}
In stair-convexity we will call the last coordinate of a point its ``height''. For a real number $y$, let $h(y)$ denote the \emph{horizontal hyperplane} $\{x \in \R^d:x_d=y$\}. For a horizontal hyperplane $h = h(y)$, let $h^+=\{x \in \R^d: x_d\geq y\}$ be the upper closed half-space bounded by $h$, and similarly let $h^-$ be the lower closed half-space. For a set $S\subseteq \R^d$, let $S(y)=S\cap h(y)$ be the horizontal slice of $S$ at height $y$. For a point $x= (x_1,\ldots, x_d)\in \R^d$, let $\overline{x}= (x_1,\ldots, x_{d-1})$ be the projection of $x$ into $\R^{d-1}$, and define $\overline{S}$ for $S\subset \R^d$ similarly. For a point $x\in \R^{d-1}$ and a real number $x_d$, let $x\times x_d=(x_1,\ldots,x_{d-1}, x_d)$.

\begin{lemma}[\cite{Bukh2011}]\label{Lemma 5.1}
A set $S\subseteq \R^d$ is stair-convex if and only if the following two conditions hold:
\begin{enumerate}
    \item Every horizontal slice $\overline{S(y)}$ is stair-convex.
    \item For every $y_1\leq y_2 \leq y_3$ such that $\overline{S(y_3)} \neq \emptyset$
we have $\overline{S(y_1)}\subseteq \overline{S(y_2)}$. (Meaning, the horizontal slice can only grow with increasing height, except that it can end by disappearing abruptly).
\end{enumerate}

\end{lemma}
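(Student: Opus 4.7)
The plan is to exploit the recursive structure of the stair-path itself, so no induction on $d$ is needed beyond what is already baked into the definition of $\sigma$. The key structural observation I would use is that, assuming $a_d\leq b_d$, the stair-path $\sigma(a,b)$ decomposes into two pieces: the vertical segment from $a$ to $a'=(a_1,\ldots,a_{d-1},b_d)$, followed by the lift to height $b_d$ of the $(d-1)$-dimensional stair-path $\sigma(\bar a,\bar b)$. The two conditions of the lemma are precisely calibrated to control these two pieces separately---condition 2 for the vertical climb, condition 1 for the horizontal stair-path within a single slice---so both directions of the equivalence follow by matching them up.

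For the forward direction, I assume $S$ is stair-convex. To verify condition 1, I take $\bar a,\bar b\in\overline{S(y)}$, lift to $a=\bar a\times y$ and $b=\bar b\times y$ in $S$, and note that since $a_d=b_d$ the intermediate point $a'$ coincides with $a$, so $\sigma(a,b)$ lies entirely at height $y$ and projects down to $\sigma(\bar a,\bar b)$, which therefore sits inside $\overline{S(y)}$. To verify condition 2, given $y_1\leq y_2\leq y_3$ with $\overline{S(y_3)}\neq\emptyset$, I pick any $\bar x\in\overline{S(y_1)}$ and any $\bar b\in\overline{S(y_3)}$, form the corresponding lifted points in $S$, and observe that the initial vertical segment of their stair-path passes through $(\bar x,y_2)$, forcing $\bar x\in\overline{S(y_2)}$.

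For the reverse direction, I assume both conditions and take arbitrary $a,b\in S$ with $a_d\leq b_d$. For every $y\in[a_d,b_d]$, condition 2 applied with $y_1=a_d$, $y_2=y$, $y_3=b_d$ (using $\bar b$ to witness nonemptiness of $\overline{S(b_d)}$) yields $\bar a\in\overline{S(y)}$, placing the entire vertical segment $aa'$ inside $S$. Specializing to $y=b_d$ gives $\bar a\in\overline{S(b_d)}$; together with $\bar b\in\overline{S(b_d)}$ and condition 1 this produces $\sigma(\bar a,\bar b)\subseteq\overline{S(b_d)}$, and lifting back to height $b_d$ shows that the remaining horizontal piece of $\sigma(a,b)$ also lies in $S$.

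I do not expect serious obstacles---the argument is largely bookkeeping once the decomposition is in hand---but the point requiring the most care is the quantifier structure of condition 2 (the ``disappearing abruptly'' clause): each invocation of $\overline{S(y_1)}\subseteq\overline{S(y_2)}$ demands a valid witness height $y_3\geq y_2$ at which the slice is nonempty, and I must always supply one explicitly (in both arguments above the points $b$ and $\bar b$ play this role). The base case $d=1$ is immediate: condition 1 is vacuous, and condition 2 reduces to the statement that $S\subseteq\R$ contains every point between two of its elements, which is precisely $1$-dimensional stair-convexity.
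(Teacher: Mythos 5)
Your argument is correct: the decomposition of $\sigma(a,b)$ into the vertical segment $aa'$ plus the lift to height $b_d$ of the $(d-1)$-dimensional stair-path $\sigma(\overline a,\overline b)$ matches the two conditions exactly, and you supply the nonemptiness witness $y_3$ each time condition~2 is invoked, which is the only delicate point. Note that the paper itself gives no proof of this lemma (it is quoted from Bukh et al.~\cite{Bukh2011}), so there is nothing in-text to compare against; your proof is the natural one following directly from the recursive definition of the stair-path and can stand as a self-contained justification.
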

\begin{lemma}[\cite{Bukh2011}]\label{Lemma 5.2}
The stair-convex hull of a set $X\subseteq \R^d$ can be (recursively) characterized as follows. For every horizontal hyperplane $h=h(y)$ that does not lie entirely above $X$, let $X'$ stand for the vertical projection of $X\cap h^-$ into $h$. Then $h\cap \stconv(X)= \stconv(X')$ (where $\stconv(X')$ is a stair-convex hull in dimension $d-1$).
\end{lemma}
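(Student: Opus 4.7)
The plan is to prove the two inclusions $\supseteq$ and $\subseteq$ separately, in both cases leveraging Lemma~\ref{Lemma 5.1} to recognize stair-convex sets by the behavior of their horizontal slices. Throughout, I identify the hyperplane $h = h(y)$ with $\R^{d-1}$ so that $\stconv(X')$, computed in dimension $d-1$, sits naturally inside $h$.

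For the inclusion $\stconv(X') \subseteq h \cap \stconv(X)$, I would first verify that $X'$ itself is contained in $h \cap \stconv(X)$. The hypothesis that $h$ does not lie entirely above $X$ supplies some $z \in X$ with $z_d \geq y$. Given an arbitrary $x \in X \cap h^-$, the stair-path $\sigma(x, z) \subseteq \stconv(X)$ begins with the vertical segment from $x$ up to $(\overline{x}, z_d)$; since $x_d \leq y \leq z_d$, this segment passes through the projection $x' = (\overline{x}, y)$. Hence $x' \in h \cap \stconv(X)$, so $X' \subseteq h \cap \stconv(X)$. By Lemma~\ref{Lemma 5.1}(1), $h \cap \stconv(X)$ is stair-convex in $\R^{d-1}$, so it must also contain the $(d-1)$-dimensional stair-convex hull of $X'$.

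For the reverse inclusion I would produce an explicit stair-convex ``certificate'' set $S^* \supseteq X$ whose slice at height $y$ is precisely $\stconv(X')$. Define
\[ S^* = \{p \in \R^d : \overline{p} \in \stconv(X'_{p_d})\}, \]
where $X'_t$ denotes the vertical projection into $\R^{d-1}$ of $X \cap h(t)^-$. The inclusion $X \subseteq S^*$ is immediate since $\overline{x} \in X'_{x_d}$ for every $x \in X$. By construction $\overline{S^*(t)} = \stconv(X'_t)$, which is stair-convex, verifying condition~(1) of Lemma~\ref{Lemma 5.1}. For condition~(2), note that $t_1 \leq t_2$ implies $X'_{t_1} \subseteq X'_{t_2}$, so the slices are monotonically nested. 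Thus $S^*$ is stair-convex, whence $\stconv(X) \subseteq S^*$, and intersecting with $h$ yields $h \cap \stconv(X) \subseteq \stconv(X')$.

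The only real subtlety, and the step I expect to require the most care, is the ``abrupt disappearance at the top'' clause of Lemma~\ref{Lemma 5.1}(2) when verifying that $S^*$ is stair-convex. Here, however, the construction is benign: once $t$ exceeds $\max_{x \in X} x_d$, the set $X'_t$ stabilizes and $S^*$ simply extends indefinitely upward with a constant projected slice, so no abrupt-disappearance clause needs to be invoked, and below $\min_{x \in X} x_d$ the slice is just empty. The resulting monotonicity of $\overline{S^*(t)}$ is exactly what condition~(2) demands, and the recursive characterization follows.
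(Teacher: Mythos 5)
Your proof is correct. Note that the paper itself does not prove this statement --- it is imported verbatim from \cite{Bukh2011} --- so there is no in-paper argument to compare against; judged on its own, your two-inclusion argument via Lemma~\ref{Lemma 5.1} is sound: the easy direction correctly uses the hypothesis that $h$ does not lie entirely above $X$ to walk each point of $X\cap h^-$ vertically up to height $y$ along a stair-path, and the certificate set $S^*$ in the reverse direction does satisfy both slice conditions, since $t\mapsto X'_t$ is monotone and each slice is a $(d-1)$-dimensional stair-convex hull. Two small facts are used implicitly and would be worth one line each: that $\stconv(X)$ is itself stair-convex (an intersection of stair-convex sets is stair-convex), which is needed before Lemma~\ref{Lemma 5.1}(1) can be applied to it, and that $A\subseteq B$ implies $\stconv(A)\subseteq\stconv(B)$, which underlies both the hull-minimality step and the nestedness of the slices of $S^*$.
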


The following lemma specifies under which conditions the stair-convex hulls of two sets intersect. Recall that in standard geometry, Kirchberger's theorem \cite{kirchberger1903tchebychefsche} states that if $Y$ and $Z$ are point sets in $\R^d$ such that $\conv(Y)$ and $\conv(Z)$ intersect, then there exist
subsets $Y'\subseteq Y$ and $Z' \subseteq Z$ of total size $|Y|+|Z|\leq d+2 $ such that $\conv(Y)$ and $\conv(Z)$ intersect.

\begin{lemma}[\cite{Bukh2011}]\label{Lemma 5.4}
Let $Y, Z \subset \R^d$ be two finite point sets that do not share any coordinate, with $|Y|= s$ and $|Z|=t$. Then:
\begin{enumerate}
\item If $s + t < d + 2$, then $\stconv(Y)$ and $\stconv(Z)$ do not intersect.
\item If $s + t = d + 2$ and $\stconv(Y), \stconv(Z)$ intersect, then they do so at a single point. Suppose they do intersect. Then the two highest points of $Y\cup Z$ (in last coordinate) belong one to $Y$ and one to $Z$. Furthermore, let $y_{\mathrm{top}}, z_{\mathrm{top}}$ be the highest points of $Y,Z$ respectively, and say $y_{\mathrm{top},d}>z_{\mathrm{top},d}$. Then the point of intersection between $\stconv(Y), \stconv(Z)$ is $p = q\times z_{\mathrm{top},d}$, where $q\in\R^{d-1}$ is the point of intersection of $\stconv{\left(\overline{Y\setminus \{y_{\mathrm{top}}\}}\right)}$ and $\stconv(\overline Z)$.
\item If $s+t>d+2$ and $\stconv(Z), \stconv(Y)$ intersect, then there exist subsets $Z'\subseteq Z,Y' \subseteq Y$ of total size $|Z'|+|Y'|=d+2$, such that $\stconv(Z'), \stconv(Y')$ intersect.
\end{enumerate}
\end{lemma}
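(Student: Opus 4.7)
The plan is to prove all three parts simultaneously by induction on the dimension $d$, with Lemma~\ref{Lemma 5.2} as the main tool. The base case $d=1$ is immediate: stair-convex hulls on the line are just closed intervals, and each of the three claims follows by a short case analysis on the ordering of the (distinct) points of $Y\cup Z$. For the inductive step, the key observation is that any intersection point $p\in\stconv(Y)\cap\stconv(Z)$ at height $y^*=p_d$ corresponds, by Lemma~\ref{Lemma 5.2}, to an intersection in $\R^{d-1}$ of the slices $\stconv(\overline{Y^*})$ and $\stconv(\overline{Z^*})$, where $Y^*=Y\cap h^-(y^*)$ and $Z^*=Z\cap h^-(y^*)$.

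For Parts~1 and 2, I would carry out the decisive counting. Assume without loss of generality that the globally highest point of $Y\cup Z$ lies in $Y$; call it $y_{\mathrm{top}}$. Since $\stconv(Z)$ is empty above height $z_{\mathrm{top},d}$ by the ``disappears abruptly'' clause of Lemma~\ref{Lemma 5.1}, we have $y^*\leq z_{\mathrm{top},d}<y_{\mathrm{top},d}$, so $y_{\mathrm{top}}\notin Y^*$ and hence $|Y^*|\leq s-1$, while $|Z^*|\leq t$. Part~1 of the inductive hypothesis, applied in $\R^{d-1}$, demands $|Y^*|+|Z^*|\geq d+1$ for the two projected hulls to meet, which forces $s+t\geq d+2$ and proves Part~1. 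If equality $s+t=d+2$ holds, all of the above inequalities must be tight, pinning $y^*=z_{\mathrm{top},d}$ (uniqueness of the intersection height) and making $z_{\mathrm{top}}$ the second-highest point of $Y\cup Z$. Part~2 of the inductive hypothesis, applied at $|Y^*|+|Z^*|=(d-1)+2$, then yields a single intersection point $q\in\R^{d-1}$ and the recursive formula $p=q\times z_{\mathrm{top},d}$.

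For Part~3, I would run a secondary induction on $s+t$, with base case $s+t=d+3$, reducing by one point at a time while preserving the intersection. With $y_{\mathrm{top}}\in Y$ the overall highest point and $y^*\leq z_{\mathrm{top},d}<y_{\mathrm{top},d}$, the slice $\stconv(\overline{Y^*})$ witnessing the intersection does not involve $y_{\mathrm{top}}$, so removing $y_{\mathrm{top}}$ from $Y$ leaves that slice unchanged at height $y^*$ \emph{provided} this height is still in the range of the reduced hull, i.e.\ $y^*\leq y_{\mathrm{sec},d}$, where $y_{\mathrm{sec}}$ is the second-highest point of $Y$. I expect the main obstacle to sit precisely here: when $y^*\in(y_{\mathrm{sec},d},y_{\mathrm{top},d}]$, the reduced hull $\stconv(Y\setminus\{y_{\mathrm{top}}\})$ has no slice at $y^*$, and the argument must instead either shift the intersection down to height $y_{\mathrm{sec},d}$ (which succeeds when $Z$ has no points in the gap $(y_{\mathrm{sec},d},y^*]$) or else remove such a gap-point of $Z$ in place of $y_{\mathrm{top}}$. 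Carefully dispatching these cases, and checking that the reduction terminates with total size exactly $d+2$ (using $|Y|+|Z|>d+2$ to guarantee that a removable point always exists), is where the bookkeeping concentrates.
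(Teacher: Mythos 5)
You should first note that the paper itself does not prove this lemma---it is quoted from \cite{Bukh2011}---so your argument has to stand on its own. Your treatment of Parts 1 and 2 does: slicing at the height $y^*$ of an intersection point via Lemma~\ref{Lemma 5.2}, observing that the globally highest point (say $y_{\mathrm{top}}\in Y$) cannot contribute to the slice because $y^*\le z_{\mathrm{top},d}<y_{\mathrm{top},d}$, and invoking the $(d-1)$-dimensional statement correctly forces $s+t\ge d+2$, and in the equality case pins $y^*=z_{\mathrm{top},d}$, shows $z_{\mathrm{top}}$ is second highest, and produces the recursive description $p=q\times z_{\mathrm{top},d}$ of the unique intersection point. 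That induction is sound.

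Part 3, however, has a genuine gap, and the repair you sketch does not work: your reduction only ever removes $y_{\mathrm{top}}$ or a point of $Z$ whose height lies in the gap $(y_{\mathrm{sec},d},y^*]$, but in general the only single points whose removal preserves the intersection may be \emph{lower points of $Y$}. Concretely, in $\R^2$ take $Y=\{(-1,-1),(0,0),(10,100)\}$ and $Z=\{(-5,1),(20,99)\}$, so $s+t=5>d+2=4$ and no coordinates are shared. The horizontal slices of $\stconv(Y)$ are $\{-1\}$ for $-1\le y<0$, $[-1,0]$ for $0\le y<100$, and $[-1,10]$ at $y=100$; those of $\stconv(Z)$ are $\{-5\}$ for $1\le y<99$ and $[-5,20]$ at $y=99$. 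Hence the hulls meet exactly in $[-1,0]\times\{99\}$, so $y^*=99>y_{\mathrm{sec},d}=0$ and \emph{both} points of $Z$ lie in the gap. Removing $y_{\mathrm{top}}=(10,100)$ leaves a hull of height at most $0$, disjoint from $\stconv(Z)$; removing $(-5,1)$ leaves the single point $(20,99)\notin\stconv(Y)$; removing $(20,99)$ leaves $(-5,1)\notin\stconv(Y)$; and the intersection cannot be shifted below height $99$. The only valid reductions delete $(-1,-1)$ or $(0,0)$, moves your scheme never makes, so the proposed case analysis cannot be completed as stated. A more robust route is to reuse your dimension induction for Part 3 as well: apply the $(d-1)$-dimensional statement to the slice at height $y^*$ to obtain $A\subseteq\{y\in Y: y_d\le y^*\}$ and $B\subseteq\{z\in Z: z_d\le y^*\}$ with $|A|+|B|=d+1$ and $\stconv(\overline A)\cap\stconv(\overline B)\ne\emptyset$ at some point $q$, and then add back a single high point: if $\max_{a\in A}a_d<\max_{b\in B}b_d$ add $y_{\mathrm{top}}$ to $A$ and intersect at height $\max_{b\in B}b_d$, otherwise add $z_{\mathrm{top}}$ (which then lies strictly above $\max_{a\in A}a_d$) to $B$ and intersect at height $\max_{a\in A}a_d$; either way $q$ lifts to an intersection point of the two hulls and the total size is exactly $d+2$.
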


The special case $|Z|=1$ of Lemma~\ref{Lemma 5.4} is important enough to be stated separately. Let $a,b \in \R^d$ be two points that do not share any coordinate. We say that $b$ has \emph{type $0$} with respect to $a$ if $b_i< a_i$ for every $i = 1,2,\ldots,d$. For $j\in\{1,2,...,d\}$ we say that $b$ has \emph{type $j$} with respect to $a$ if $b_j> a_j$ but $b_i< a_i$ for all $i=j+1,...,d$.
\begin{lemma}[\cite{Bukh2011}]\label{Lemma x in stconv}
Let $X\subseteq \R^d$ be a point set, and let $a \in \R^d$ be a point. Then $a \in \stconv(X)$ if and only if $X$ contains a point of type $j$ with respect to $a$ for every $j=0,1,\ldots,d$.
\end{lemma}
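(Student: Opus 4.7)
The plan is to prove the lemma by induction on $d$, using Lemma~\ref{Lemma 5.2} as the main engine (Lemma~\ref{Lemma 5.4} with $|Z|=1$ could be used instead, but the direct recursive description from Lemma~\ref{Lemma 5.2} produces exactly the recursive structure needed to track types). Throughout I assume $a$ and the points of $X$ share no coordinate, which is the natural general-position setting in which types are defined.

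For the base case $d=1$, $\stconv(X)$ is just the interval $[\min X,\max X]$, so $a\in\stconv(X)$ iff $X$ contains points both below and above $a$, i.e., a point of type $0$ and a point of type $1$, as required.

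For the inductive step, set $h=h(a_d)$ and $X^- = X\cap h^-$. First I would observe that if $X$ contains no point with last coordinate $\geq a_d$, then $h$ lies entirely above $X$ and $\stconv(X)\subseteq h^-\setminus h$, so $a\notin\stconv(X)$; on the other hand, the only way for a point of $X$ to have $b_d>a_d$ is to have type $d$ with respect to $a$. So ``$X$ contains a point of type $d$'' is precisely the hypothesis needed to apply Lemma~\ref{Lemma 5.2}. Under this hypothesis, Lemma~\ref{Lemma 5.2} gives
\begin{equation*}
a\in\stconv(X)\iff \bar a\in\stconv\bigl(\overline{X^-}\bigr),
\end{equation*}
where $\stconv$ on the right is taken in $\R^{d-1}$. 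Applying the induction hypothesis to $\overline{X^-}\subseteq\R^{d-1}$ and $\bar a$, the right-hand condition becomes: $\overline{X^-}$ contains a point of type $j$ with respect to $\bar a$ for each $j=0,1,\ldots,d-1$.

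The final step is a bookkeeping check: for $0\le j\le d-1$, a point $b\in X$ has type $j$ with respect to $a$ in $\R^d$ iff $b\in X^-$ (since type $j<d$ forces $b_d<a_d$) and $\bar b$ has type $j$ with respect to $\bar a$ in $\R^{d-1}$. Indeed, the defining inequalities for type $j$ in $\R^d$ are $b_j>a_j$ (when $j\ge 1$) and $b_i<a_i$ for $i=j+1,\ldots,d$; splitting off the condition $b_d<a_d$ (equivalent to $b\in X^-$) leaves exactly the defining inequalities for type $j$ of $\bar b$ with respect to $\bar a$ in $\R^{d-1}$. Combining this equivalence with the type-$d$ condition from the first step yields: $a\in\stconv(X)$ iff $X$ contains a point of type $j$ with respect to $a$ for every $j=0,1,\ldots,d$, completing the induction.

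The main obstacle is not conceptual but notational: one must be careful that the ``types'' in dimensions $d$ and $d-1$ line up correctly under projection, and that the edge case where $h(a_d)$ lies above $X$ is handled as the failure of the type-$d$ condition rather than requiring a separate argument. The general-position assumption (no shared coordinates) is essential so that all strict inequalities defining the types are well-defined at every stage of the recursion.
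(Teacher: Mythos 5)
The paper gives no proof of this lemma at all --- it is imported verbatim from Bukh, Matou\v{s}ek and Nivasch \cite{Bukh2011} --- so there is no internal proof to compare against; judged on its own, your argument is correct and follows the natural route: induction on $d$ via the slice characterization of Lemma~\ref{Lemma 5.2}, with the type-$d$ condition serving exactly as the hypothesis that $h(a_d)$ does not lie entirely above $X$ (using that an open lower half-space is stair-convex, so $\stconv(X)$ stays strictly below $h(a_d)$ when $X$ does), and with types $0,\ldots,d-1$ descending correctly under the projection $b\mapsto\bar b$. The only hypothesis worth making explicit is the one you already flag: $a$ must share no coordinate with any point of $X$, which is the general-position setting in which the paper applies the lemma.
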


\paragraph{Transference Lemma.}
The almost-correspondence between convex hulls and stair-convex hulls in the stretched grid is formalized in the following lemma. Let us say that two points $a = (a_1,\ldots,a_d)$ and $b = (b_1,\ldots,b_d)$ in $\BB(G_s)$ are \emph{far apart} if, for every $i = 1, 2,...,d$, we have either $a_i\ll_i b_i$ or $b_i\ll_i a_i$. We also extend this notion to sets: Two sets $Y,Z \subseteq \R^d$ are \emph{far apart} if each $z \in Z$ is far apart from each $y \in Y$.

\begin{lemma}[\cite{Bukh2011}]\label{lemma 1.4}
Let $Y,Z$ be sets in $\BB(G_s)$ that are far apart. Then $\stconv(Y) \cap \stconv(Z) = \emptyset$ if and only if $\conv(Y) \cap \conv(Z) = \emptyset$.
\end{lemma}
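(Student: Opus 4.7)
The plan is to reduce the statement to the minimal case $|Y|+|Z|=d+2$ and then induct on $d$. For the reduction, if $\conv(Y)\cap\conv(Z)\ne\emptyset$, Kirchberger's theorem yields subsets $Y'\subseteq Y$, $Z'\subseteq Z$ with $|Y'|+|Z'|\le d+2$ whose convex hulls intersect; padding with additional points (which only enlarges each hull and preserves ``far apart'') gives $|Y'|+|Z'|=d+2$. Conversely, if $\stconv(Y)\cap\stconv(Z)\ne\emptyset$, Lemma~\ref{Lemma 5.4}(3) yields the analogous minimal subsets. Hence it suffices to prove the equivalence under the extra hypothesis $|Y|+|Z|=d+2$.

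The base case $d=1$ is immediate, since $\stconv=\conv$ in one dimension. For the inductive step, let $y_{\mathrm{top}}$ and $z_{\mathrm{top}}$ be the highest points (in last coordinate) of $Y$ and $Z$, respectively, and assume without loss of generality that $y_{\mathrm{top},d}>z_{\mathrm{top},d}$. Lemma~\ref{Lemma 5.4}(2) already gives $\stconv(Y)\cap\stconv(Z)\ne\emptyset$ if and only if $\stconv(\overline{Y\setminus\{y_{\mathrm{top}}\}})\cap\stconv(\overline{Z})\ne\emptyset$ in $\R^{d-1}$. So my goal is to prove the parallel statement for ordinary convex hulls, namely that under the far-apart hypothesis,
\[
  \conv(Y)\cap\conv(Z)\ne\emptyset\ \Longleftrightarrow\ \conv(\overline{Y\setminus\{y_{\mathrm{top}}\}})\cap\conv(\overline{Z})\ne\emptyset.
\]
The two projected sets have total size $(d-1)+2$ and inherit the pairwise far-apart condition for the first $d-1$ coordinates, so the inductive hypothesis equates their convex-hull intersection with their stair-convex-hull intersection, closing the loop.

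The geometric justification for the displayed equivalence is that $y_{\mathrm{top},d}$ is astronomically larger than every other last coordinate in $Y\cup Z$: by far-apart, $y_{\mathrm{top},d}\ge K_d\cdot z_{\mathrm{top},d}$, and the second-highest point of $Y$ has $d$-coordinate below $z_{\mathrm{top},d}/K_d$. Consequently, at any height $t\le z_{\mathrm{top},d}$, the weight on $y_{\mathrm{top}}$ in any convex combination of $Y$ summing to height $t$ is at most $t/y_{\mathrm{top},d}\le 1/K_d$. The projection of the slice $\conv(Y)\cap h(t)$ into $\R^{d-1}$ is therefore a perturbation of $\conv(\overline{Y\setminus\{y_{\mathrm{top}}\}})$ with additive error at most $x_{i,n}/K_d$ in each coordinate $i$. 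Meanwhile $\conv(Z)$ lies entirely in the slab $\{x_d\le z_{\mathrm{top},d}\}$, and at a suitable height its slice projects close to $\conv(\overline{Z})$. An intersection in $\R^d$ thus descends to an approximate intersection in $\R^{d-1}$, and conversely an exact intersection in $\R^{d-1}$ can be lifted back to $\R^d$ using the $1/K_d$ slack.

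The main obstacle is the quantitative bookkeeping: verifying that the slack built into $K_i=2^d\cdot x_{i-1,n}$ is large enough to absorb these $O(1/K_d)$-scale perturbations at every level of the induction. The cleanest approach is to show that the far-apart property in $\R^{d-1}$ already guarantees a quantitative separation (for disjoint projected hulls) or quantitative margin (for meeting ones) that dominates the perturbation, so that approximate intersections upgrade to exact ones and vice versa. Since $K_i$ is a factor of $2^d$ larger than the bounding-box diameter $x_{i-1,n}$ of the preceding direction, these constants are calibrated precisely for this purpose; nevertheless, chasing the error estimates cleanly through the induction is where essentially all of the technical work in Bukh et al.'s original proof lies.
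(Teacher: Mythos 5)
A preliminary remark: the paper does not prove this lemma at all --- it is quoted from Bukh, Matou\v{s}ek and Nivasch \cite{Bukh2011} --- so your attempt can only be judged against the original argument, not against anything in this text.

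Your outline has the right architecture (reduce to $|Y|+|Z|=d+2$, peel off the highest point, induct on $d$), but the actual content of the lemma is missing. The step you display, that under far-apartness $\conv(Y)\cap\conv(Z)\neq\emptyset$ if and only if $\conv\bigl(\overline{Y\setminus\{y_{\mathrm{top}}\}}\bigr)\cap\conv(\overline Z)\neq\emptyset$, is precisely the hard part, and your justification for it does not close. An intersection point upstairs projects exactly into $\conv(\overline Z)$ but only \emph{approximately} into $\conv\bigl(\overline{Y\setminus\{y_{\mathrm{top}}\}}\bigr)$ (the weight on $y_{\mathrm{top}}$ is small but nonzero), so the forward direction yields an approximate intersection in $\R^{d-1}$, not an exact one; conversely, lifting an exact intersection from $\R^{d-1}$ also incurs errors. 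To convert approximate intersections into exact ones and back, you need a quantitative statement --- e.g.\ that far-apart sets in $\R^{d-1}$ whose (stair-)convex hulls are disjoint are in fact separated by a margin that dominates the $O(x_{i,n}/K_d)$-scale perturbations --- and this must be carried as a \emph{strengthened} induction hypothesis; the purely qualitative ``iff'' you induct on is too weak to absorb the errors. You name this issue yourself and then defer it (``chasing the error estimates \ldots is where essentially all of the technical work \ldots lies''), which is an accurate self-assessment: what remains unproven is the lemma's essential content, not a routine verification. A secondary gap: your reduction to total size exactly $d+2$ by ``padding'' is impossible when $|Y|+|Z|<d+2$, and in that regime the lemma still asserts something nontrivial --- by Lemma~\ref{Lemma 5.4}(1) the stair-hulls are disjoint, so you must show the ordinary convex hulls are disjoint too, which again requires the quantitative far-apart separation argument rather than Kirchberger.
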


If a small set $Y\subset\BB(G_s)$ is given, not necessarily from the stretched grid, and we consider all possible fixed-size sets $Z\subset G_s$, then almost all such sets $Z$ will be far apart from $Y$, except for a negligible fraction of them. What will interest us is whether $\conv(Z)$ and $\conv(Y)$ intersect. So, according to the lemma above, in the vast majority of cases it is enough to check whether $\stconv(Z)$ and $\stconv(Y)$ intersect.

\subsection{Warm-up: Upper bounds for the First Selection Lemma}
As a warm-up, we recall the proof that both the stretched grid and the stretched diagonal yield the upper bound $c_d \le (d+1)^{-(d+1)}$ for the First Selection Lemma.

Let $X$ be either the stretched grid or the stretched diagonal, and let $a\in\R^d$ be a point. For each $0\le j\le d$, let $C_j(a)$ be the set of all points $b\in X$ that have type $j$ with respect to $a$. By the Transference Lemma and Lemma \ref{Lemma x in stconv}, the number of full-dimensional simplices spanned by $X$ that contain $a$ is very close to the product $\prod_{j=0}^d |C_j(a)|$. By the arithmetic-geometric mean inequality, this expression achieves its maximum when all terms have the same size, namely $|C_j(a)|=|X|/(d+1)$ for each $j$. The claim follows.

\begin{figure}
\centering
\includegraphics{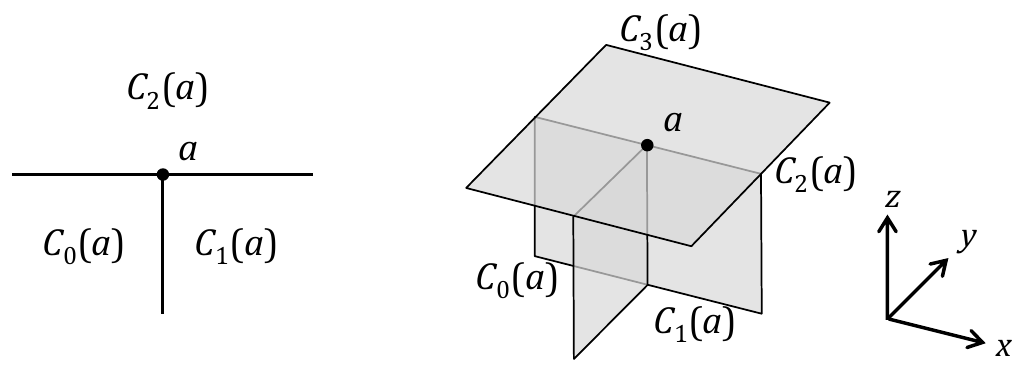}
\caption{\label{fig:Stair_convexity} Partition of space into $d+1$ parts.}
\end{figure}

\section{Results for the stretched grid}\label{sec_grid}

In this section we derive the upper bounds for $c_{d,1}$ yielded by the stretched grid.

\subsection{A recursive formula}

Let $q,p\in [0,1]^d$ be two given points with $q=(q_1, \ldots, q_d)$, $p=(p_1, \ldots, p_d)$. Informally, we want to define the probability $\RecFSG_d(q,p)$ that a randomly chosen $(d-1)$-dimensional stair-simplex from $[0,1]^d$ intersects the stair-path $\sigma(q,p)$. Formally, let $\mathcal{D}$ be the uniform distribution in $([0,1]^d)^d$. Every element $A \in ([0,1]^d)^d$ represents a $d$-tuple $z_1, \ldots, z_d$ of points in $[0,1]^d$ which span the stair-simplex $S(A)=\stconv\{z_1, \ldots, z_d\}$. Then define $\RecFSG_d(q,p)$ as the measure
\begin{equation*}
\RecFSG_d(q,p)=\mu[A\in \mathcal{D}: S(A)\cap \sigma(q,p)\neq\emptyset].
\end{equation*}
The connection between $\RecFSG_d$ and the stretched grid is as follows. Given $q,p$ as above, let $\alpha = \RecFSG_d(q,p)$. Given $n$, let $G_s(n)= X_1 \times \cdots \times X_d$ be the $d$-dimensional stretched grid of dimensions $m \times \cdots \times m$ with $m=n^{1/d}$, where $X_i=  \{x_{i,1}, \ldots, x_{i,m}\}$ for each $i$. If $q'=(q'_1, \ldots, q'_d)$, $p'=(p'_1, \ldots, p'_d)$ are points satisfying $x_{i,q_i m}\le q'_i\le x_{i,q_i m+1}$ and $x_{i,p_i m}\le p'_i\le x_{i,p_i m+1}$, then the stair-path $\sigma(q',p')$ intersects an $(\alpha\pm o(1))$-fraction of the $(d-1)$-dimensional stair-simplices spanned by $G_s(n)$. Let $Y=\{q,p\}$, and let $Z$ be the set of vertices of a stair-simplex. The fraction of stair-simplices for which $Y,Z$ are not far apart is negligible as $n \xrightarrow{}\infty$. Hence, by the Transference Lemma (Lemma \ref{lemma 1.4}), the segment $qp$ also intersects an $(\alpha\pm o(1))$-fraction of the $(d-1)$-dimensional simplices spanned by $G_s(n)$. Since the number of simplices spanned by $G_s(n)$ is $\binom{n}{d} = n^d/d!-O(n^{d-1})$, the constant that multiplies $n^d$ is $\RecFSG_d(q,p)/d!$.

\paragraph{Types of stair-paths.} We define the type $T$ of a stair-path $qp$ by $T=\{j:q_j<p_j\}$ (we can safely ignore cases where $p_j=q_j$ for some $j$, since they have measure $0$).
In dimension $d$ there are $2^{d}$ possible types of stair-paths, but half of them are equivalent to the other half since $p$ and $q$ just switch positions. This leaves us with $2^{d-1}$ possible types of stair-paths. 

 In addition, stair-convexity is symmetric with respect to the first coordinate, and so is the stretched grid. Therefore, without loss of generality we can assume that $1\notin T$. This leaves us with $2^{d-2}$ possible types of stair-paths. 
\begin{theorem}\label{theorem1}
Let $q,p \in [0,1]^d$ be two points, with $q=(q_1, \ldots ,q_d),p=(p_1, \ldots ,p_d)$. If $p_d\geq q_d$ let $x=p,y=q$; otherwise, let $x=q,y=p$. Then $\RecFSG_d$ is given by:
\begin{align*}
\RecFSG_1(q,p)&=x_1-y_1, \\
\RecFSG_d(q,p)&=
d!(x_d^d-y_d^d)\prod_{i=1}^{d-1}{y_i^i(1-y_i)}\\
&\qquad{}+d(1-x_d)x_d^{d-1}\RecFSG_{d-1}(\{\overline{q},\overline{p}\}), \qquad \text{for } d\geq 2.
\end{align*}
\end{theorem}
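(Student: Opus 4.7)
The plan is to apply Lemma~\ref{Lemma 5.4} (the stair-convex analogue of Kirchberger's theorem) to the two sets $Y=\{y,x\}$ (size $2$) and $Z=\{z_1,\ldots,z_d\}$ (size $d$), so that $|Y|+|Z|=d+2$, while using the observation that $\stconv(\{y,x\})=\sigma(y,x)$. The $z_i$ are i.i.d.\ uniform in $[0,1]^d$, and we must compute the measure of $\{\stconv(Y)\cap\stconv(Z)\ne\emptyset\}$. The base case $d=1$ is immediate: $\sigma(y,x)=[y_1,x_1]$ and the stair-simplex is a single uniform point. For $d\ge 2$, part~2 of Lemma~\ref{Lemma 5.4} forces the two highest points of $Y\cup Z$ (in their $d$-th coordinate) to split one-to-one between $Y$ and $Z$ whenever the two hulls intersect. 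Letting $z^*$ denote the highest element of $Z$, this leaves exactly two possibilities: (A) $y_d<z^*_d<x_d$, i.e.\ all of $Z$ lies below $x$ while $z^*$ lies above $y$; or (B) $z^*_d>x_d$ and every other $z_i$ has $d$-th coordinate below $x_d$.

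In Case~A, I will argue via horizontal slices (Lemma~\ref{Lemma 5.2}) at height $z^*_d$: the slice of $\sigma(y,x)$ is the singleton $\{\overline{y}\}$ lying on the vertical ascending portion of the stair-path, while the slice of $\stconv(Z)$ is $\stconv(\overline{Z})$. By the monotonicity clause of Lemma~\ref{Lemma 5.1}, this single height suffices to decide intersection. Hence Case~A occurs iff $y_d<z^*_d<x_d$ and $\overline{y}\in\stconv(\overline{Z})$. Since the $d$-th coordinate is independent of the first $d-1$, these events factor:
\[
\Pr[\text{Case A}]=(x_d^d-y_d^d)\cdot\Pr[\overline{y}\in\stconv(\overline{Z})].
\]
To evaluate the second factor I apply Lemma~\ref{Lemma x in stconv}: there are $d$ possible types in $\R^{d-1}$ (namely $0,1,\ldots,d-1$) and exactly $d$ i.i.d.\ uniform points $\overline{z_i}$, so pigeonhole forces exactly one point of each type. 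Multiplying by the multinomial coefficient $d!$, computing each type probability explicitly, and telescoping the products of $y_i$-factors, collapses the expression to $d!\prod_{i=1}^{d-1}y_i^i(1-y_i)$, producing the first term of the recurrence.

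In Case~B the relevant slice height is $x_d$: the slice of $\sigma(y,x)$ there is the $(d-1)$-dimensional stair-path $\sigma(\overline{y},\overline{x})$ (the horizontal ``top'' portion), while by Lemma~\ref{Lemma 5.2} the slice of $\stconv(Z)$ is $\stconv(\overline{Z\setminus\{z^*\}})$. Thus Case~B reduces to a $(d-1)$-dimensional instance of the same intersection problem. The probability that exactly one $z_i$ has $d$-th coordinate above $x_d$ is $d(1-x_d)x_d^{d-1}$; conditioning on this event, the $d-1$ projected points $\overline{z_i}$ ($i$ not indexing $z^*$) remain i.i.d.\ uniform in $[0,1]^{d-1}$ by independence of coordinates, so the conditional probability is exactly $\RecFSG_{d-1}(\{\overline{q},\overline{p}\})$. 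Adding the contributions of Cases~A and B yields the claimed recurrence.

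The main obstacle will be the slice-level verification in Case~A---in particular, deducing from Lemma~\ref{Lemma 5.4} (which is phrased conditionally on intersection) that the stated condition $\overline{y}\in\stconv(\overline{Z})$ is genuinely equivalent to intersection, together with the combinatorial simplification of the product of type probabilities into the clean form $\prod_{i=1}^{d-1}y_i^i(1-y_i)$. Measure-zero configurations (shared coordinates, points on the boundaries of types, $\overline{y}$ on the boundary of $\stconv(\overline{Z})$) can be safely ignored since they do not contribute to the probability.
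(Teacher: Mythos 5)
Your proposal is correct and follows essentially the same route as the paper: split according to whether the highest point of $Y\cup Z$ lies in $Y$ or in $Z$ (via Lemma~\ref{Lemma 5.4}, part 2), obtaining the non-recursive term $(x_d^d-y_d^d)\,d!\prod_{i=1}^{d-1}y_i^i(1-y_i)$ in the first case and the factor $d(1-x_d)x_d^{d-1}$ times a one-dimension-lower instance in the second. The only cosmetic differences are that you evaluate the first term directly by the type-counting of Lemma~\ref{Lemma x in stconv} (as in the paper's warm-up) instead of the paper's iterated ``discard the top point'' projection, and that you spell out, via slices (Lemmas~\ref{Lemma 5.1}--\ref{Lemma 5.2}), the sufficiency direction that the paper leaves implicit -- both yield the same recurrence.
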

\begin{proof}
Let $Y=\{q,p\}$, and let $Z=\{z_1, z_2,\ldots ,z_d\}\subset [0,1]^d$ such that $Y,Z$ do not share any coordinate, where $|Y\cup Z|=d+2$. Hence, $\stconv (Y)$ is a stair-path and $\stconv (Z)$ is a $(d-1)$-stair-simplex.
By Lemma \ref{Lemma 5.4} part (2), $\stconv (Y)$ and $\stconv (Z)$ will intersect in at most one point, and if they do intersect, then after projecting the $d+1$ lower points to dimension $d-1$, there will also be an intersection point.

\textbf{Base case:}
When $d=1$, the path is of type $T=\emptyset$ which means $q_1 \geq p_1$. The measure of simplices in $\mathcal{D}$ whose single point lies between them is $q_1-p_1$. 

\textbf{Recursive case:}
 The recursive function is built out of a two-part addition: a non-recursive part that we get when the highest point belongs to $Y$, and a recursive part that we get when the highest point belongs to $Z$.
The first part is derived as follows. Let $p=y_{\mathrm{top}}$ be the highest of all points; see Figure \ref{fig_r3}(\emph{a}).
All the points of $Z$ must be below $p_d$ but not all of them should be below 
$q_d$. The measure of simplices in $\mathcal{D}$ with this property is $p_d^d-q_d^d$.
Now, when projecting to the lower dimension, $d-1$, we ``discard" the highest point $p$, and stay with $\overline{Z}$ and the point $\overline{q}$. So, it remains to calculate the measure of simplices $\stconv (\overline{Z})$ that intersect the point $\overline q$.
Since $z_{\mathrm{top}}\in \overline{Z}$ should be above $q_{d-1}$ and the other $d-1$ points should be below it, this occurs with measure $d(1-q_{d-1})q_{d-1}^{d-1}$.
Let us again project to a lower dimension and ``discard" the highest point. We are left with a simplex in a lower dimension, $\stconv \left(\overline{\overline Z\setminus \{z_{\mathrm{top}}\}}\right)$, and the point $\overline{\overline{q}}$. We continue this way until we reach $d=1$.
Hence, for the first part we get the term
\begin{equation*}
(p_d^d-q_d^d)d!\prod_{i=1}^{d-1}{q_i^i(1-q_i)}.
\end{equation*}
The second part is derived as follows:  Let $q_d\leq p_d\leq z_{\mathrm{top},d}$. The point
$z_{\mathrm{top}}\in Z$ must be above $p_d$, while the other $d-1$ points of $Z$ must be below it; see Figure \ref{fig_r3}(\emph{b}, \emph{c}). This happens with measure $d(1-p_d)p_d^{d-1}$. When we ``discard" the highest point, we are left with a stair-path and a stair-simplex one dimension lower. Therefore, we can recursively invoke $\RecFSG_{d-1}$. Hence, for the second part we get the term
\begin{equation*}
d(1-p_d)p_d^{d-1}\RecFSG_{d-1}(\overline{q},\overline{p}).
\end{equation*}
If $q_d\geq p_d$, all calculations are the same, except that we interchange $q$ and $p$.
\end{proof}

\begin{figure}
\centerline{\includegraphics{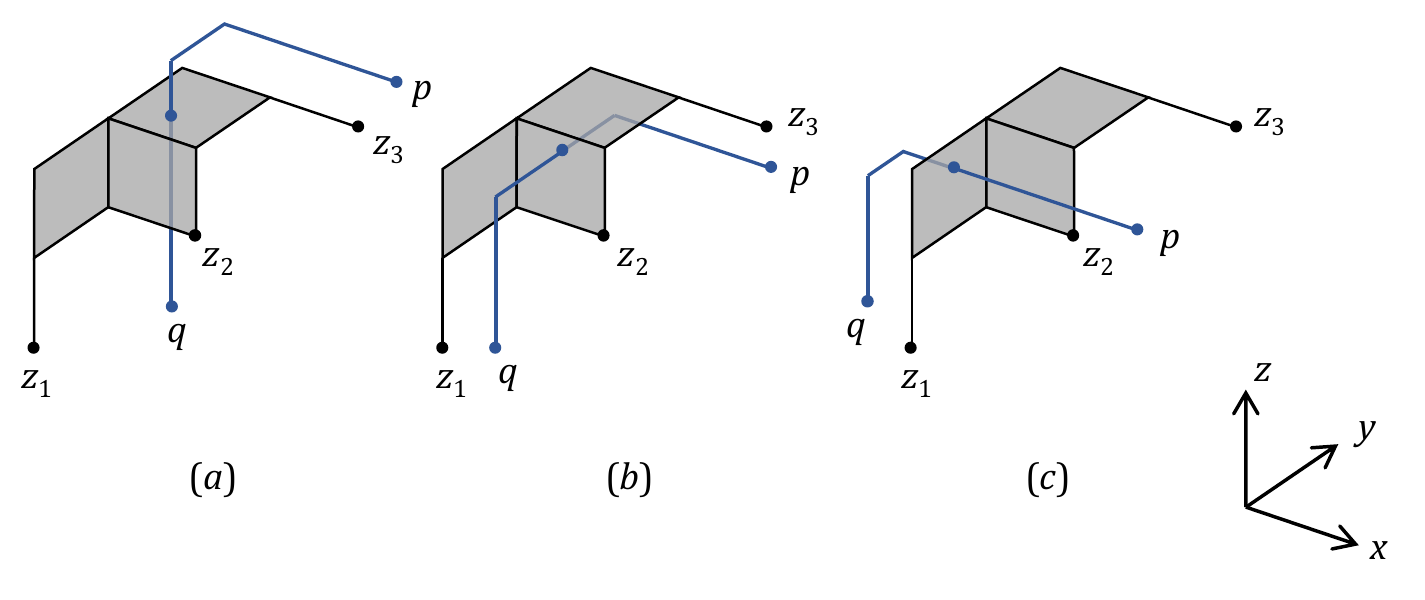}}
    \caption{A stair-path stabbing a stair-triangle ($3$-vertex stair-simplex) in dimension $3$.}\label{fig_r3}
\end{figure}

\subsection{Extending the stair-path to the boundary of the cube}
Without loss of generality, we can extend a given stair-path $qp$ until the two endpoints touch the boundary of the unit cube. This makes the calculations easier, since for each type $T$, there are two variables that can be set to $0$ or $1$.

Without loss of generality assume $d\notin T$, so $p_d\leq q_d$. Then $p_d$ can be extended to $0$. For $q$, the first coordinate that is elevated from $q$ to $p$, namely $\max \{i:q_i\leq p_i\}$, is the one that can be extended to $0$. If, on the other hand, $p_i \leq q_i$ for all $i$, then $q_1$ can be extended to $1$.

Hence, given the type $T$ of the stair-path $qp$, we proceed as follows: Say $d\notin T$ (otherwise, switch $p$ and $q$ and let $T$ be $\{ 1,\ldots,d\} \setminus T$). Then we let $p_d=0$. In addition, if $T \neq \emptyset$ then we let $q_{\max T}=0$, while if $T=\emptyset$ then we let  $q_1=1$. (As noted before, for the case of the stretched grid we can assume without loss of generality that $1\notin T$.)

\subsection{Maximum for the stretched grid}
Using the recursive function of Theorem \ref{theorem1} for the stretched grid, we get $2^{d-2}$ polynomial expressions in the coordinates $q_1, \ldots ,q_d,p_1, \ldots ,p_d$. We need to find the maximum for each expression.

\subsubsection{Results for dimension 3}

For $d=3$ there are two fundamentally different types: $T=\emptyset$ and $T=\{2\}$.

\paragraph{Type $T=\emptyset$.} For this case, we can let $q_1=1,p_3=0$, so in this case the function we want to maximize is
\begin{multline*}
F=\RecFSG_3(q,p)/3!= \\
(p_1-1) q_3^2\cdot\Big(p_1\big(p_2^2 (1-2 q_3)+q_2^2 (q_3-1)+p_2^3 q_3\big)+
   q_2 (q_2-1+q_3-q_2 q_3)\Big)
\end{multline*}
in the domain
\begin{equation*}
U=\{(p_1,p_2,q_2,q_3) \in \mathbb{R}^4:
0\leq p_1 \leq 1,
0\leq p_2 \leq q_2 \leq 1,
0\leq q_3 \leq 1 \}.
\end{equation*}

In order to find analytically the maximum of $F$ in $U$, we have to examine the interior of $U$ and its faces. 
$U$ is a 4-dimensional polytope, which according to the software ``polymake" \cite{Assarf2017} has 62 faces of various dimensions.

Alternatively, we could try using the function \verb|Maximize| of Mathematica 11, which finds the absolute maximum of a given function in a given range. Unfortunately, when given $F$ and $U$, \verb|Maximize| does not terminate in a reasonable amount of time. Hence, we employ a hybrid approach, dealing with the interior of $U$ by hand, and using \verb|Maximize| for the seven facets of $U$.

\paragraph{Maximum in the interior.}
If the maximum is attained by a point $(p_1, p_2, q_2, q_3)$ in the interior of $U$, then it must satisfy the following four equations:
\begin{align*} \label{eq: differential equations} 
    \frac{dF}{dq_2}&=(p_1-1)\overbrace{\big(1+2 (p_1-1) q_2\big)}^\text{I}(q_3-1)q_3^2=0,\\
    \frac{dF}{dq_3}&= (p_1-1)q_3 \\ 
    &\quad{}\cdot\overbrace{\Big(-(q_2-1) q_2(3 q_3-2)+p_1 \big(p_2^2 (2-6 q_3)+3 p_2^3 q_3+q_2^2(3q_3-2)\big)\Big)}^\text{II}=0,\\ 
    \frac{dF}{dp_1}&=q_3^2 \\
    &\quad{}\cdot\overbrace{\Big(q_2\big(1+2(p_1-1) q_2\big)(q_3-1)
    +(2 p_1-1) p_2^3 q_3 +p_2^2 \big(-1+p_1(2-4 q_3)+2q_3\big)\Big)}^\text{III}=0,\\
    \frac{dF}{dp_2}&=(p_1-1) p_1 p_2 q_3^2\overbrace{ \big(2+(3 p_2-4) q_3\big)}^\text{IV}=0.
\end{align*}
The solutions that satisfy $\{q_3=0\}$, or $\{p_1=1\}$, or both $\{q_3=1\}$ and $\{p_1=0\}$, or both $\{q_3=1\}$ and $\{p_2=0\}$, are irrelevant, since they give $F=0$. Therefore, any local maximum should satisfy the equations $II=0$ and $III=0$. One possibility is to satisfy $q_3=1$ and $IV=0$. The other three possibilities are to satisfy the equation $I=0$, as well as one of $IV=0$, $\{p_1=0\}$, $\{p_2=0\}$.
 These are the relevant solutions to the system:
\begin{gather*}
  \{p_1=(2q_2-1)/(2q_2), p_2=0, q_3=2/3\},\\
   \{p_1=0, p_2=0, q_2=1/2, q_3=2/3\},\\
  \{p_1=0, p_2=1/2, q_2=1/2, q_3=2/3\},\\
  \{p_1=1/2, p_2=1/2, q_2=1, q_3=4/5\},\\
  \{p_1=1/2, p_2=2/3, q_2=2/3,q_3=1\}.
\end{gather*}
After checking all these solutions, we get the maximum, $1/25$ by the solution $\{p_1=1/2, 
  p_2=1/2, q_2=1, q_3=4/5\}$.
 \paragraph{Maximum on the facets.}
  In order to find maximum on the facets of $U$, we used \verb|Maximize|. The results can be seen in Table \ref{table:1}.
 
\begin{table}
{\small 
  \begin{verbatim}
in:  F=(p1-1)q3^2*(p1(p2^2(1-2q3)+q2^2(q3-1)+p2^3q3)
      +q2(q2-1+q3-q2 q3))
  
in:  F1=F/.q2->1;
     Maximize[{F1,0<q3<1,0<p1<1,0<p2<1},{q3,p1,p2}]
out: {1/25,{q3->4/5,p1->1/2,p2->1/2}}

in:  F2=F/.q2->p2;
     Maximize[{F2,0<q3<1,0<p1<1,0<p2<1},{q3,p1,p2}]
out: {1/27,{q3->64/81,p1->10/37,p2->37/64}}

in:  F3=F/.p2->0;
     Maximize[{F3,0<q3<1,0<p1<1,0<q2<1},{q3,p1,q2}]
out: {1/27,{q3->2/3,p1->1/3,q2->3/4}}

in:  F4=F/.p1->1;
     Maximize[{F4,0<q3<1,0<p2<q2<1},{q3,q2,p2}]
out: {0,{q3->1/2,q2->3/4,p2->1/4}}

in:  F5=F/.p1->0;
     Maximize[{F5,0<q3<1,0<p2<q2<1},{q3,q2,p2}]
out: {1/27,{q3->2/3,q2->1/2,p2->1/4}}

in:  F6=F/.q3->1;
     Maximize[{F6,0<p1<1,0<p2<q2<1},{p1,q2,p2}]
out: {1/27,{p1->1/2,q2->27/32,p2->2/3}}

in:  F7=F/.q3->0;
     Maximize[{F7,0<p1<1,0<p2<q2<1},{p1,q2,p2}]
out: {0,{p1->1/2,q2->3/4,p2->1/4}} 
  \end{verbatim}}
\caption{Maximum on faces, $T=\emptyset$}
\label{table:1}
\end{table}

\paragraph{Type $T=\{2\}$.} Here one can proceed similarly. Here the maximum is also $1/25$, this time given by $qp=\{(2/3,0,4/5),(1/3,3/4,0)\}$. (Note that in this case, the maximum is in the interior of the domain, whereas in the case $T=\emptyset$ the maximum was on one of its facets.) The calculations can be found in the ancillary files of the arXiv version of this paper.

\subsubsection{Results for dimensions 4, 5, and 6}
For dimensions $d\ge 4$, the problem turns out to be too complex for the above approach. Therefore, we used the function \verb|NMaximize| of Mathematica, which searches for the absolute maximum numerically. The function \verb|NMaximize| provides four different numerical methods, called  \verb|NelderMead|, \verb|DifferentialEvolution|, \verb|SimulatedAnnealing|, and \verb|RandomSearch|. We tried all four of them. In dimensions $4$ and $5$ they all gave the same results, though not in dimension $6$.

In dimension $4$ there are four fundamentally different types of stair-paths. Table \ref{table:2} sums up the numerical results. The maximum among all the types is $0.00457936n^4$.

\begin{table}
\centering
\begin{tabular}{ccc}
    $T$ & $qp$ & maximum \\ \hline \\[-6pt]
    $\emptyset$ & $\{(1,0.99973,0.841676,0.824961),$ & 0.00456416\\ 
     & $(0.499854,0.57138,0.590885,0)\}$ & \\[6pt]
    $\{2\}$ & $\{(0.70017,0,0.841749,0.824908),$ & 0.00456416\\
      & $(0.400339,0.714113,0.590808,0)\}$ & \\[6pt]
    $\{3\}$ & $\{(0.666089,0.777112,0,0.827549),$ & \textbf{0.00457936} \\
      & $(0.395765,0.49188,0.794824,0)\}$ & \\[6pt]
    $\{2,3\}$& $\{(0.604237,0.491879,0,0.830208),$ &\textbf{0.00457936} \\
      & $(0.333913,0.77711,0.792279,0)\}$ & \\ 
  \end{tabular}
\caption{Results for dimension $4$, stretched grid.}
\label{table:2}
\end{table}

In dimension $5$ there are eight different types of stair-paths. The maximum among all of them is $0.000405335n^5$. See Table \ref{table:3}.

\begin{table}
\centering
  \begin{tabular}{ccc}
    $T$ & $qp$ & Maximum \\ \hline \\[-6pt]
    $\emptyset$ & $\{(1, 0.999998, 0.863413, 0.850444, 0.848693),$ & 0.000402464 \\
    & $(0.499999, 0.604701, 0.650764, 0.657374, 0)\}$ & \\[6pt]
    $\{2\}$ & $\{(0.71664,0,0.863421,0.850413,0.848695),$ & 0.000402464 \\
    & $(0.433377,0.697668,0.650744,0.657406,0)\}$ & \\[6pt]
    $\{3\}$ & $\{(0.675465,0.796913,0,0.850715,0.848819),$ & 0.00040419\\
     & $(0.428888,0.554061,0.78136,0.657527,0)\}$ & \\[6pt]
    $\{4\}$ & $\{(0.661946,0.786912,0.815446,0,0.850046),$ & 0.000404818 \\ 
     & $(0.425712,0.554894,0.590175,0.827888,0)\}$ & \\[6pt]
    $\{2,3,4\}$ & $\{(0.574368,0.554597,0.589951,0,0.853691),$ & 0.000404818 \\
     & $(0.337538,0.786894,0.815268,0.824237,0)\}$ & \\[6pt]
    $\{2,3\}$ & $\{(0.630424,0.478229,0,0.852643,0.849618),$ & 0.000404815 \\
     & $(0.387534,0.738946,0.779659,0.658423,0)\}$ & \\[6pt]
    $\{2,4\}$& $\{(0.622544,0.495106,0.817108,0,0.850792),$ & \textbf{0.000405335} \\ 
     & $(0.388108,0.740553,0.590829,0.826241,0)\}$ & \\[6pt]
    $\{3,4\}$ & $\{(0.612094,0.740651,0.590451,0,0.852007),$ & \textbf{0.000405335} \\ 
     & $(0.377364,0.494832,0.816924,0.824642,0)\}$ & \\[6pt]

  \end{tabular}
\caption{Results for dimension $5$, stretched grid.}
\label{table:3}
\end{table}

In dimension $6$ there are $16$ different types of stair-paths. Here, not all maximization methods gave the same result. The method \verb|DifferentialEvolution| gave the best results in all types. The maximum obtained is $0.0000291323n^6$, for types $T=\{2,3,5\}$ and $T=\{4,5\}$. For type $T=\{2,3,5\}$, the coordinates that give this maximum are
\begin{multline*}
   qp=\{(0.592993,0.545248,0.59284,0.843717,0,0.869422),\\
   (0.38511,0.750149,0.798446,0.658605,0.849763,0)\}. 
\end{multline*}

\section{Results for the stretched diagonal}\label{sec_diag}

In this section we prove the following:

\begin{theorem}\label{theorem2}
For every $d\ge 3$ there exists a stair-path $qp$ that stabs $n^d/(d+2)^{d-1}-o(n^d)$ stair-simplices spanned by the stretched diagonal $D_s(n)$.
\end{theorem}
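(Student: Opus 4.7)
The plan is to exhibit an explicit stair-path $\sigma(q,p)\subset[0,1]^d$ and directly compute the fraction $R_d(q,p)$ of stair-simplices spanned by $d$ uniform diagonal points that it stabs; showing $R_d=d!/(d+2)^{d-1}$ will give the desired count $R_d\cdot\binom{n}{d}=n^d/(d+2)^{d-1}-o(n^d)$. Concretely I propose
\[
q=\Bigl(\tfrac{1}{d+2},\tfrac{2}{d+2},\ldots,\tfrac{d-1}{d+2},0\Bigr)\quad\text{and}\quad p=\Bigl(\tfrac{3}{d+2},\tfrac{3}{d+2},\tfrac{4}{d+2},\ldots,\tfrac{d+1}{d+2}\Bigr),
\]
so $q_j=j/(d+2)$ for $j<d$, $q_d=0$, and $p_j=(\max(j,2)+1)/(d+2)$. (We work in $[0,1]^d$ via the bijection $\pi$, which preserves stair-convexity.)

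To evaluate $R_d$ I apply Lemma~\ref{Lemma 5.4}(2) to $Y=\{q,p\}$ and a random $d$-point set $Z$ of diagonal points; since $|Y|+|Z|=d+2$, the stair-hulls meet in at most one point and the two highest elements of $Y\cup Z$ come one from each side. Because $p_d>q_d=0$ the top of $Y$ is $p$, giving two subcases. In Case~$\alpha$ the top $P_d$ of $Z$ is second highest ($s_d<p_d$), and the intersection reduces by the lemma to $\overline q\in\stconv(\overline Z)$, which by Lemma~\ref{Lemma x in stconv} forces the sorted parameters $s_1<\cdots<s_d$ to land one per ``type interval'' of $\overline q$; the widths are $1/(d+2)$ for $j<d$ and $p_d-(d-1)/(d+2)=2/(d+2)$ for $j=d$, so Case~$\alpha$ contributes $2\,d!/(d+2)^d$. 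In Case~$\beta$ the roles swap ($s_{d-1}<p_d<s_d$): the geometric prefactor is $d\,p_d^{d-1}(1-p_d)=d(d+1)^{d-1}/(d+2)^d$, and rescaling by $p_d$ reduces the remaining condition to $R_{d-1}(\overline q/p_d,\overline p/p_d)$ with
\[
\overline q/p_d=\bigl(\tfrac{1}{d+1},\ldots,\tfrac{d-1}{d+1}\bigr),\qquad \overline p/p_d=\bigl(\tfrac{3}{d+1},\tfrac{3}{d+1},\tfrac{4}{d+1},\ldots,\tfrac{d}{d+1}\bigr).
\]

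The critical observation is that this rescaled pair belongs to a self-similar family: for $k\ge 1$ set $\tilde q^{(k)}_j=j/(k+2)$ ($j=1,\ldots,k$) and $\tilde p^{(k)}_j=(\max(j,2)+1)/(k+2)$; the only structural difference from the top-level $q$ is that the last coordinate of $\tilde q^{(k)}$ is $k/(k+2)$ rather than $0$. I will prove by induction on $k$ the auxiliary identity
\[
R_k\bigl(\tilde q^{(k)},\tilde p^{(k)}\bigr)=\frac{(k+1)!}{(k+2)^k},
\]
with base $R_1(1/3,1)=2/3$. The inductive step uses the same Case~$\alpha$/Case~$\beta$ split, but now $\tilde q^{(k)}_k=k/(k+2)>\tilde q^{(k)}_{k-1}$, so the Case~$\alpha$ width for $s_k$ is $\tilde p^{(k)}_k-k/(k+2)=1/(k+2)$ (not $2/(k+2)$); Case~$\alpha$ therefore contributes only $k!/(k+2)^k$, while Case~$\beta$ rescales by $\tilde p^{(k)}_k=(k+1)/(k+2)$ and lands exactly on $(\tilde q^{(k-1)},\tilde p^{(k-1)})$. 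The inductive hypothesis then yields
\[
R_k=\frac{k!}{(k+2)^k}+\frac{k(k+1)^{k-1}}{(k+2)^k}\cdot\frac{k!}{(k+1)^{k-1}}=\frac{(k+1)!}{(k+2)^k}.
\]
Plugging $k=d-1$ into the top-level Case~$\beta$ gives $R_d=2\,d!/(d+2)^d+d\cdot d!/(d+2)^d=d!/(d+2)^{d-1}$, as claimed.

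The main obstacle is the mismatch between top and internal levels: replacing $q_d=0$ by $\tilde q^{(k)}_k=k/(k+2)$ is precisely what changes the last Case~$\alpha$ width from $2/(d+2)$ at the top to $1/(k+2)$ internally, and this extra factor of $2$ only at the top is what promotes the internal value $(k+1)!/(k+2)^k$ into the strictly larger $d!/(d+2)^{d-1}$. Verifying that the rescaling $\overline p/p_d$ really does reproduce $\tilde p^{(k-1)}$ coordinate by coordinate at every recursion level, and that all relevant widths remain positive so the Case~$\alpha$ product formula is applicable throughout, is the one nontrivial bookkeeping step on which the proof depends.
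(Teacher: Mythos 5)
Your proposal is correct and follows essentially the same route as the paper: your stair-path is, up to two inessential coordinate changes (extending the lower endpoint's last coordinate to $0$ and shrinking the upper endpoint's first coordinate from $1$ to $3/(d+2)$), the same path used in the paper, and your two-case split via Lemma~\ref{Lemma 5.4}(2) with the type-interval product from Lemma~\ref{Lemma x in stconv} is exactly the recursion of Lemma~\ref{Lemma_RecFSD}, with your self-similar rescaled family $(\tilde q^{(k)},\tilde p^{(k)})$ playing the role of the paper's parametric family $q'=(1,3/c,\ldots,(d+1)/c)$, $p'=(1/c,\ldots,(d-1)/c,k)$. The only differences are cosmetic (unordered probabilities, i.e.\ a factor $d!$, and rescaling by $p_d$ instead of carrying the scale parameter $c$), and your arithmetic checks out, giving the same value $d!/(d+2)^{d-1}$.
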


Hence, for $d=4,5,6$ the stretched diagonal yields worse bounds for $c_{d,1}$ than the stretched grid. In this section we also prove that for $d=3$ the stretched diagonal yields $c_{3,1} \leq 1/25$, just like the stretched grid.

\subsection{A recursive formula for a special case}

Let $\mathcal{D}$ be the uniform distribution in $[0,1]^d$. Every element $A=(a_1, \ldots, a_d) \in [0,1]^d$ represents a $d$-tuple of points $\vec{a}_1,\ldots,\vec{a}_d$ where $\vec a_i = (a_i, \ldots, a_i)\in[0,1]^d$ for each $i$. These points span the stair-simplex $S(A)=\stconv\{\vec{a}_1,\ldots,\vec{a}_d\}$.

Given two points $q,p\in [0,1]^d$ with $q=(q_1, \ldots, q_d)$, $p=(p_1, \ldots, p_d)$, let $\FSD(q,p)$ be the measure of all the $d$-tuples $A=(a_1, \ldots, a_d)\in \mathcal{D}$ that satisfy the following two conditions:
\begin{enumerate}
    \item $a_1<\cdots<a_d$,
    \item $\stconv\{q,p\}\cap\stconv S(A)\neq\emptyset$.
\end{enumerate}

The connection between $\FSD$ and the stretched diagonal is as follows: Let $n$ be large enough, and let $D_s(n)$ be the $n$-point stretched diagonal, and let $q', p'$ be defined from $q,p$ as before. Then the probability that a random stair-simplex spanned by $D_s(n)$ intersects the stair-path $q'p'$ is very close to $d!\FSD(q,p)$, and hence, by the Transference Lemma, the number of simplices spanned by $D_s(n)$ that intersect the segment $q'p'$ is $\FSD(q,p)n^d$ plus lower-order terms.

We prove Theorem~\ref{theorem2} by calculating $\FSD$ for a certain sub-type of stair-path that belongs to the type $T=\emptyset$. Specifically, we will calculate $\FSD(q,p)$ for the special case where the points $p=(p_1, \ldots, p_d)$ and $q=(q_1, \ldots, q_d)$ satisfy the following conditions:
\begin{equation}\label{eq_cond_FSD}
    p_1\leq p_2\leq \cdots\leq p_d,\qquad q_2\leq q_3\leq \cdots \leq q_d\leq q_1=1,\qquad p_i\leq q_i \text{ for all $i$}.
\end{equation}

Let $p,q\in[0,1]^d$ satisfy conditions (\ref{eq_cond_FSD}), and let $Y=\{q,p\}$. We define $\RecFSD_d(q,p)$ as the measure of $(d-1)$-tuples $a_1, \ldots, a_{d-1}\in [0,1]$ satisfying the following two conditions:
\begin{enumerate}
\item $a_1< a_2 < \cdots < a_{d-1}<q_d$,
\item $\stconv(\overline Y) \cap \stconv(\overline Z)\neq\emptyset$, where $Z=\{\vec{a}_1, \ldots, \vec{a}_{d-1}\}$.
\end{enumerate}
(For $p,q$ not satisfying conditions (\ref{eq_cond_FSD}), $\RecFSD_d(q,p)$ is undefined.) Note that if $p,q$ satisfy (\ref{eq_cond_FSD}), then so do $\overline p,\overline q$.

\begin{observation}\label{obs_FSD}
Let $p,q\in[0,1]^d$ satisfy conditions (\ref{eq_cond_FSD}). Then,
\begin{equation*}
\FSD_d(q,p)=\RecFSD_{d+1}(q\times 1,p\times 1).
\end{equation*}
\end{observation}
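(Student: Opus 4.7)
The plan is to prove the identity by directly unwinding the two definitions and matching them term-by-term. First I would verify that the padded points $\tilde q := q \times 1$ and $\tilde p := p \times 1$ satisfy conditions~(\ref{eq_cond_FSD}) in dimension $d+1$, so that $\RecFSD_{d+1}(\tilde q, \tilde p)$ is actually defined. This is pure bookkeeping: since $p_d \leq q_d \leq 1$, the chain $\tilde p_1 \leq \cdots \leq \tilde p_{d+1}$ just appends $p_d \leq 1$; the chain $\tilde q_2 \leq \cdots \leq \tilde q_{d+1} \leq \tilde q_1$ appends $q_d \leq 1 = 1$; and the coordinate-wise inequality $\tilde p_i \leq \tilde q_i$ is preserved in the new last coordinate because $1 \leq 1$.

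Next I would write both measures as probabilities over the common sample space $[0,1]^d$ of tuples $(a_1, \ldots, a_d)$ and match their defining events. Condition~(1) of $\RecFSD_{d+1}(\tilde q, \tilde p)$ reads $a_1 < a_2 < \cdots < a_d < \tilde q_{d+1}$, and $\tilde q_{d+1} = 1$ makes this coincide with the condition $a_1 < \cdots < a_d$ of $\FSD_d$ up to the null event $a_d = 1$. For condition~(2) of $\RecFSD_{d+1}$, the set $Z = \{\vec a_1, \ldots, \vec a_d\} \subset [0,1]^{d+1}$ projects to $\overline{Z} = \{(a_i, \ldots, a_i) \in \R^d : i = 1, \ldots, d\}$, while $\overline{\{\tilde q, \tilde p\}} = \{q, p\}$. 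Hence the required stair-convex intersection in $\R^d$ becomes precisely $\stconv\{q, p\} \cap \stconv\{\vec a_1, \ldots, \vec a_d\} \neq \emptyset$, which is exactly condition~(2) of $\FSD_d(q, p)$.

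Since the two defining events coincide up to a set of measure zero, the two measures are equal. There is no serious technical obstacle: the observation is essentially a notational bridge, ensuring that padding the arguments with a $1$ in the new last coordinate converts the $\FSD$ quantity of interest into a value of $\RecFSD$, from which the subsequent subsections can derive a clean one-dimension-at-a-time recursion. The only subtlety to flag is the strict inequality $a_d < 1$ versus $a_d \leq 1$, which contributes nothing to the measure and can be dispatched in a single sentence.
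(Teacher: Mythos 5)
Your argument is correct: padding with a final coordinate equal to $1$ makes conditions~(\ref{eq_cond_FSD}) hold in dimension $d+1$, the projections $\overline{\{q\times 1,p\times 1\}}=\{q,p\}$ and $\overline{Z}$ turn condition~(2) of $\RecFSD_{d+1}$ into condition~(2) of $\FSD_d$, and the constraint $a_d<1$ differs from $a_d\le 1$ only on a null set. The paper states this Observation without proof, treating it as exactly this kind of definition-unwinding, so your write-up matches the intended (and only natural) argument.
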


\begin{lemma}\label{Lemma_RecFSD}
For $p,q\in [0,1]^d$ satisfying conditions (\ref{eq_cond_FSD}), $\RecFSD_d(q,p)$ is given recursively as follows:
\begin{align*}
\RecFSD_2(q,p)&=q_2-p_1,\\
\RecFSD_{d+1}(q,p)&=(q_d-p_d)p_1\prod_{i=2}^{d-1}{(p_i-p_{i-1})}\\
&\qquad{}+(q_{d+1}-q_d)\RecFSD_d(\overline{q},\overline{p}), \qquad \text{for } d\geq 2.
\end{align*}
\end{lemma}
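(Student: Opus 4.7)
The plan is to unfold the definition of $\RecFSD_{d+1}$ and analyze, via Lemma~\ref{Lemma 5.4}(2), when the stair-path $\stconv(\overline{Y})$ meets the stair-simplex $\stconv(\overline{Z})$ in $\R^d$, where $\overline{Y}=\{\overline{p},\overline{q}\}$ and $\overline{Z}=\{\overline{\vec{a}_1},\ldots,\overline{\vec{a}_d}\}$ (so $|Y|+|Z|=d+2$, matching the hypothesis of the lemma). Since $p_d\le q_d$ and $a_1<\cdots<a_d$, the tops (by last coordinate) of $\overline{Y}$ and $\overline{Z}$ are $\overline{q}$ at height $q_d$ and $\overline{\vec{a}_d}$ at height $a_d$. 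Lemma~\ref{Lemma 5.4}(2) says intersection can occur only if these two highest points come one from each set, so I will split on the relative order of $p_d$, $q_d$, $a_d$.

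The base case $d=2$ is immediate: stair-convexity in $\R$ is ordinary convexity, so $a_1$ must lie in $[p_1,q_1]=[p_1,1]$, and combining with the constraint $a_1<q_2$ (and $p_1\le p_2\le q_2\le 1$) gives measure $q_2-p_1$.

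For the induction step, first suppose $p_d<a_d<q_d$. Then $\overline{q}$ and $\overline{\vec{a}_d}$ are the two highest points, one from each set, and Lemma~\ref{Lemma 5.4}(2) reduces intersection in $\R^d$ to the condition $\overline{\overline{p}}\in\stconv(\overline{\overline{Z}})$ in $\R^{d-1}$. By Lemma~\ref{Lemma x in stconv}, and using $p_1\le\cdots\le p_{d-1}$, this membership forces each $a_i$ into a unique slot: $a_1<p_1$ and $p_{i-1}<a_i<p_i$ for $2\le i\le d-1$ (these are automatically compatible with $a_1<\cdots<a_{d-1}$). Integrating over this region together with $a_d\in(p_d,q_d)$ yields the first summand $(q_d-p_d)\,p_1\prod_{i=2}^{d-1}(p_i-p_{i-1})$.

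Next suppose $q_d<a_d<q_{d+1}$. Now $\overline{\vec{a}_d}$ is the overall top, and Lemma~\ref{Lemma 5.4}(2), with the roles of $Y$ and $Z$ swapped, reduces intersection to $\stconv(\overline{\overline{Y}})\cap\stconv(\overline{\overline{Z}}\setminus\{\overline{\overline{\vec{a}_d}}\})\ne\emptyset$ in $\R^{d-1}$ --- and this is exactly the event measured by $\RecFSD_d(\overline{q},\overline{p})$ (the secondary constraint $a_{d-1}<q_d$ needed to keep one top point in each set is already built into that definition, and the conditions (\ref{eq_cond_FSD}) carry over verbatim from $p,q$ to $\overline{p},\overline{q}$). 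Integrating $a_d$ over $(q_d,q_{d+1})$ gives the second summand $(q_{d+1}-q_d)\,\RecFSD_d(\overline{q},\overline{p})$. The remaining case $a_d<p_d$ contributes zero, since then both of the two highest points lie in $\overline{Y}$. Summing the two nonzero contributions yields the claimed recursion; the only technical point requiring care is the bookkeeping across two layers of projections ($\R^{d+1}\to\R^d\to\R^{d-1}$) and the verification that the monotonicity in (\ref{eq_cond_FSD}) genuinely forces the clean iterated-product form of the integration in the first case.
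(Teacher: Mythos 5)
Your proposal is correct and follows essentially the same route as the paper's proof: the same base case, the same split according to whether the top point of the projected configuration comes from $\overline{Y}$ (giving the slot-by-slot constraints $a_1<p_1$, $p_{i-1}<a_i<p_i$ via Lemma~\ref{Lemma x in stconv} and monotonicity of $p$) or from $\overline{Z}$ (giving the recursive term, with $a_{d-1}<q_d$ absorbed into the definition of $\RecFSD_d$). The only cosmetic difference is that you organize the cases by the location of $a_d$ relative to $p_d,q_d,q_{d+1}$ and explicitly note the vanishing case $a_d<p_d$, which the paper leaves implicit.
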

For example, putting together Observation \ref{obs_FSD} and Lemma \ref{Lemma_RecFSD}, we get that for $p,q$ satisfying (\ref{eq_cond_FSD}) we have
\begin{align*}
 \FSD_1(q,p)&=1-p_1,\\
 \FSD_2(q,p)&=(q_2-p_2)p_1+(1-q_2)(q_2-p_1),\\
 \FSD_3(q,p)&=(q_3-p_3)p_1(p_2-p_1)+(1-q_3)\bigl((q_2-p_2)p_1+(q_3-q_2)(q_2-p_1)\bigr).
\end{align*}
\begin{proof}[Proof of Lemma \ref{Lemma_RecFSD}]
By induction on $d$.

\textbf{Base case:}
When $d=2$, we need $a_1 <q_2$ and $p_1 \leq a_1\leq q_1=1$. Therefore, we need $p_1 \leq a_1\leq q_2$. The measure of numbers $a_1\in [0,1]$ satisfying this condition is $q_2-p_1$.

\textbf{Recursive case:} Suppose we are in dimension $d+1$, and let $Y=\{q,p\}$ with $q,p\in[0,1]^{d+1}$ and $Z=\{\vec{a}_1, \ldots, \vec{a}_d\}$ where $\vec {a}_i=(a_i,\ldots,a_i)\in [0,1]^{d+1}$. We need the numbers $a_i$ to satisfy the following two separate conditions:
\begin{enumerate}
\item $a_d<q_{d+1}$.
\item $\stconv(\overline Y) \cap \stconv(\overline Z)\neq\emptyset$ with $a_1< a_2 < \cdots < a_{d}$.
\end{enumerate}

Let us calculate the measure of tuples satisfying the second condition.
As in the recursive formula for the stretched grid, here there are two possibilities, according to whether the highest point in dimension $d$ is $\overline q$ or $\overline{\vec{a}_d}$. In the first case, we must have $a_d<q_d$ (which automatically implies  $a_d<q_{d+1}$). In addition, by Lemma \ref{Lemma 5.4}, we need to have $p_d<a_d$, and after ``discarding" point $q$ and projecting down to dimension $d-1$, we need to have $p'=\overline{\overline{p}} \in \stconv{\bigl(\overline{\overline{ Z}}\bigr)}$. For this, we apply Lemma \ref{Lemma x in stconv}. The set $\overline{\overline{Z}}$ must contain a point of type $j$ with respect to $p'$ for every $j=0,\ldots,d-1$. We also need the coordinates $a_i$ to be in increasing order. Therefore, for type $d-1$, the $(d-1)$-st coordinate of $\vec{a}_d$ should be higher than $p_{d-1}$, that is $p_{d-1}<a_d$ but since we demand $p_d<a_d$ this condition is irrelevant. For type $d-2$, the $(d-2)$-nd coordinate of $\vec{a}_{d-1}$ should be higher than $p_{d-2}$, that is $p_{d-2}<a_{d-1}$. In addition, $\vec{a}_{d-1}$ should be lower than $p$ in the higher coordinates, and therefore $p_{d-2}<a_{d-1}<p_{d-1}$. And so on. In general, for every type $j=1,\ldots, d-2$ we must have $p_j < a_{j+1} < p_{j+1}$.
For type $0$, the lowest point $\vec{a}_1$ must satisfy $a_1 < p_1$.
To sum up, the measure in the first case is
\begin{equation*}
(q_d-p_d)p_1\prod_{i=2}^{d-1}{(p_i-p_{i-1})}.
\end{equation*}

In the second case we must have $a_d>q_d$. Together with the condition $a_d<q_{d+1}$, this implies $q_d<a_d<q_{d+1}$.  In addition, by Lemma \ref{Lemma 5.4}, we need to have $a_{d-1}<q_d$, and after ``discarding" point $\vec a_d$ and projecting down to dimension $d-1$, we need to have an intersection between $\stconv\bigl\{\overline{\overline{p}},\overline{\overline{q}}\bigr\}$ and $\stconv\bigl\{\overline{\overline{\vec a_1}}, \ldots, \overline{\overline{\vec a_{d-1}}}\bigr\}$. But these are exactly conditions 1 and 2 above, one dimension lower. Therefore, the measure in the second case is
\begin{equation*}
(q_{d+1}-q_d)\RecFSD_{d-1}(\overline{q},\overline{p}).
\end{equation*}
\end{proof}

\begin{proof}[Proof of Theorem \ref{theorem2}]
Let us take the following stair-path:
\begin{equation*}
qp=\left\{\left(1,\frac{3}{d+2},\frac{4}{d+2},\ldots,\frac{d+1}{d+2}\right),\left(\frac{1}{d+2},\frac{2}{d+2},\ldots ,\frac{d-1}{d+2},\frac{d-1}{d+2}\right)\right\}.
\end{equation*}
The points $q, p$ satisfy conditions (\ref{eq_cond_FSD}). Hence,
\begin{equation*}
\FSD_d(q,p)=\RecFSD_{d+1}(q\times 1,p\times 1)=\frac{2}{(d+2)^d}+\frac{1}{d+2}\cdot \RecFSD_d(q,p)
\end{equation*}
It follows by induction on $d$ that, if $q',p'\in\mathbb{R}^d$ have the form $q'=(1, 3/c, 4/c,\allowbreak \ldots, (d+1)/c)$ and $p'=(1/c, 2/c, \ldots, (d-1)/c, k)$ for some $c,k$, then $\RecFSD_d(q',p') = d/c^{d-1}$. Therefore, in our case, $\RecFSD_d(q,p) = 1/(d+2)^{d-1}$. The claim follows.
\end{proof}

\subsection{Dimension 3}
In order to find the maximum for the stretched diagonal in $d=3$, we examine all different possible types of stair-paths, each one having its own expression $F$ and domain $U$. See Table \ref{table:4}. The expressions $F$ can be derived from Lemma \ref{Lemma 5.4}, as in previous sections, or they can be derived more directly as follows: Let $\vec a = (a,a,a)$, $\vec b = (b,b,b)$, $\vec c = (c,c,c)$, with $0<a<b<c<1$, and consider the stair-simplex $S=\stconv(\vec a, \vec b, \vec c)$. $S$ contains three axis-parallel rectangles with different orientations: rectangle $R_1$ with opposite corners $(a,a,b)$, $(a,b,c)$, rectangle $R_2$ with opposite corners $(a,b,b)$, $(b,b,c)$, and rectangle $R_3$ with opposite corners $(a,b,c)$, $(b,c,c)$. Furthermore, the stair-path $qp$ is composed of three axis-parallel segments with three different orientations. In order for the stair-path $qp$ to intersect $S$, one of the former's segments must intersect one of the latter's rectangles. Hence, the numbers $a,b,c$ must satisfy some inequalities depending on the coordinates $q, p$, which are not hard to work out.

We calculated the maximum in each case using \verb|Maximize|. In contrast to the stretched grid, where the degree of $F$ was $8$, here the degree of $F$ is only $3$, so \verb|Maximize| had no problem finding the maximum quickly.  The maximum is $(1/25)n^3$, see Table \ref{table:5}.
\begin{table}
   $T=\emptyset$, $0<p_1 < p_2 < q_2 < q_3<1$, \\
    $F_1=p_1(q_2-p_2)(1-q_3)+
 p_1 (p_2-p_1) (q_3-p_2)+(q_2-p_1) (1-q_3) (q_3-q_2)$ \\[6pt]
    $T=\emptyset$,  $0<p_1 < p_2 <q_3< q_2<1$, \\
    $F_2=p_1 (p_2-p_1) (q_3-p_2) + p_1 (1 - q_3) (q_3-p_2)$\\[6pt]
    $T=\emptyset$, $0<p_2 < p_1<q_3 < q_2<1$, \\ 
    $F_3=p_1 (1 - q_3) (q_3-p_1)$\\[6pt]
  $T=\emptyset$, $0<p_2 < p_1<q_2 < q_3<1$,   \\ 
  $F_4=p_1 (q_2-p_1) (1 - q_3) + (q_2-p_1) (1 - q_3) (q_3-q_2)$\\[6pt]
  $T=\{1\}$, $0<p_1 < p_2 < q_2 < q_3<1$,\\
    $F_5=p_1 (q_2-p_2) (1 - q_3) + p_1 (p_2-p_1) (q_3-p_2) + 
 p_1 (1 - q_3) (q_3-q_2)$\\[6pt]
 $T=\{1\}$, $0<p_1 < p_2 <q_3 < q_2<1$,\\
 $F_6=p_1 (p_2-p_1) ( q_3-p_2) + p_1 (1 - q_3) ( q_3-p_2)$\\[6pt]
 $T=\{1\}$, $0<p_2 < p_1<q_3 < q_2<1$,\\
 $F_7=p_1 (1 - q_3) (q_3-p_1)$\\[6pt]
 $T=\{1\}$, $0<p_2 < p_1 < q_2 < q_3<1$,\\
 $F_8=p_1 (q_2-p_1) (1 - q_3) + p_1 (1 - q_3) (q_3-q_2)$\\[6pt]
 $T=\{1\}$, $0<p_2 < q_2 < p_1<1$, $0<q_2 < q_3<1$,\\
 $F_9=q_2 (1 - q_3) (q_3-q_2)$\\[6pt]
 $T=\{2\}$, $0<p_1 < q_1 < p_2 < q_3<1$,\\
 $F_{10}=(p_2 - q_1) q_1 (1 - q_3) + 
 p_1 (p_2-p_1) (q_3-p_2) + (q_1-p_1) (1 - q_3) (q_3-p_2)$\\[6pt]
 $T=\{2\}$, $0<p_1<p_2 < q_1<1$, $0<p_2 < q_3<1$,\\
 $F_{11}=p_1 (p_2-p_1) (q_3-p_2) + (p_2-p_1) (1 - q_3) (q_3-p_2)$\\[6pt]
 $T=\{2\}$, $0<p_1 < q_1<q_3 < p_2<1$,\\
 $F_{12}=q_1 (1 - q_3) (q_3-q_1)$\\[6pt]
 $T=\{1,2\}$, $0<q_1 < p_1 < p_2 < q_3<1$,\\
 $F_{13}=(p_2 - q_1) q_1 (1 - q_3) + p_1 (p_2-p_1) (q_3-p_2) + (p_1 - q_1) (1 - q_3) (q_3-p_2)$\\[6pt]
 $T=\{1,2\}$, $0<q_1 < p_2<p_1<1$, $0<p_2 < q_3<1$,\\
 $F_{14}=(p_2 - q_1) q_1 (1 - q_3) + (p_2 - q_1) (1 - q_3) (q_3-p_2 )$\\[6pt]
 $T=\{1,2\}$, $0<q_1 < p_1<1$, $0<q_1<q_3 < p_2<1$,\\
 $F_{15}=q_1 (1 - q_3) (q_3-q_1)$
\caption{Different functions for the stretched diagonal in dimension $3$.}
\label{table:4}
\end{table}

\begin{table}
\centering
\begin{tabular}{ccc}
    $F$ & $qp$ & maximum \\ \hline\\[-6pt]
$F_1$&$(1,3/5,4/5),(1/5,2/5,0)$ & $\mathbf{1/25}$\\ 
$F_2$&$(1,59/64,27/32),(1/3,49/96,0)$  & $1/27$ \\ 
    $F_3$& $(1,27/32,2/3),(1/3,5/32,0)$ & $1/27$ \\ 
    $F_4$& $(1,1/2,2/3),(1/6,5/64,0)$ &  $1/27$ \\ 
    $F_5$& $(0,11/16,27/32),(1/3,49/96,0)$ & $1/27$ \\ 
    $F_6$&$(0,59/64,27/32),(1/3,49/96,0)$  & $1/27$ \\ 
    $F_7$& $(0,27/32,2/3),(1/3,5/32,0)$ & $1/27$ \\ 
    $F_8$&$(0,1/2,2/3),(1/3,5/32,0)$  & $1/27$ \\ 
    $F_9$& $(0,1/3,2/3),(11/16,5/32,0)$ & $1/27$ \\ 
    $F_{10}$& $(2/5,0,4/5),(1/5,3/5,0)$ & $\mathbf{1/25}$ \\ 
    $F_{11}$& $(11/16,0,3/4),(1/12,5/12,0)$ &  $1/27$ \\ 
    $F_{12}$& $(1/3,0,2/3),(5/32,27/32,0)$ &  $1/27$ \\ 
    $F_{13}$& $(1/5,0,4/5),(2/5,3/5,0)$ & $\mathbf{1/25}$ \\ 
    $F_{14}$& $(5/32,0,2/3),(3/4,47/96,0)$ & $1/27$ \\ 
    $F_{15}$& $(1/3,0,2/3),(11/16,27/32,0)$ & $1/27$ \\ 
  \end{tabular}
\caption{Maximum of each function in Table \ref{table:4}.}
\label{table:5}
\end{table}

\section{Discussion and future work}\label{sec_final}

Since in dimension $d=3$, the stretched grid and the stretched diagonal yield the same upper bound of $n^3/25$ (which is known to be tight), we were expecting the same to happen in higher dimensions. We were surprised to find this not to be the case. Also surprising is the fact that the bounds obtained for $d\ge 4$ do not seem to be rational. Running \verb|NMaximize| with higher precision, we find the bound for $d=4$ to be $0.004579364805943860006\ldots$.

In order to gain more confidence in our numerical results, we re-ran the stretched-grid numerical maximization experiments in a newer version of Mathematica (12.3) as well as in Python using SciPy's \verb|differential_evolution| function. We got the same results. See the ancillary files of the arXiv version.

The main open problem is to find the exact value of the constants $c_{d,1}$. Since in dimension $4$, the stretched grid and the stretched diagonal do not give the same value, we are not sure that the value given by the stretched grid is tight.

Another interesting problem is to study the corresponding variant of the Second Selection Lemma, in which we look for a line that stabs many simplices from a given subset of $X$-simplices. One could also study variants in which a line stabs geometric objects other than simplices.

\paragraph{Acknowledgements.} Thanks to Elad Horev, Rom Pinchasi, and the anonymous referee for their useful comments.

\bibliographystyle{plainurl}
\bibliography{ref}

\end{document}